\newcommand\eat[1]{}
        \journalname{Working Paper}
		\newcommand{\bartonnew}[1]{\textcolor{black}{#1}}
\newcommand{\pref}{\succsim\xspace}
\newcommand{\midd}{\mathbin{: }}
\DeclarePairedDelimiter\ceil{\lceil}{\rceil}
\DeclarePairedDelimiter\floor{\lfloor}{\rfloor}
	\newcommand{\Pref}[1][]{
		\ifthenelse{\equal{#1}{}}{\mathrel \succsim}{\mathop{R_{#1}}}
	}    
				\newcommand{\spref}{\ensuremath{\succ}}                                      
	\newcommand{\sPref}[1][]{                  
		\ifthenelse{\equal{#1}{}}{\mathrel \succ}{\mathop{P_{#1}}}
	}                                          
	\newcommand{\Indiff}[1][]{                 
		\ifthenelse{\equal{#1}{}}{\mathrel \sim}{\mathop{\sim_{#1}}}
	}
	\newcommand{\prefset}[1][]{\ifthenelse{\equal{#1}{}}{\mathcal{R}}{\mathcal{R}_{#1}}}
	\newcommand{\indiff}{\mathbin \sim\xspace}
\newcommand{\newrule}{{EAR}\xspace}
\newcommand{\harisnew}[1]{\textcolor{black}{#1}}
\let\enumtemp=\enumerate
\def\enumerate{\enumtemp\itemsep 1pt}
\let\itemtemp=\itemize
\def\itemize{\itemtemp\itemsep 1pt}
\newcommand{\Omit}[1]{}
\begin{document}


\title{The Expanding Approvals Rule: \\
Improving Proportional Representation and Monotonicity}

	\author{Haris Aziz \and Barton E. Lee}

	\institute{%
	  H. Aziz \and 
	 B. E. Lee \at
	 Data61, CSIRO and UNSW,
	 	  Sydney 2052 , Australia \\
	 	  Tel.: +61-2-8306\,0490 \\
	 	  Fax: +61-2-8306\,0405 \\
	 \email{haris.aziz@unsw.edu.au, barton.e.lee@gmail.com}
}

	\newlength{\wordlength}
	\newcommand{\wordbox}[3][c]{\settowidth{\wordlength}{#3}\makebox[\wordlength][#1]{#2}}
	\newcommand{\mathwordbox}[3][c]{\settowidth{\wordlength}{$#3$}\makebox[\wordlength][#1]{$#2$}}
    		\renewcommand{\algorithmicrequire}{\wordbox[l]{\textbf{Input}:}{\textbf{Output}:}} 
    		 \renewcommand{\algorithmicensure}{\wordbox[l]{\textbf{Output}:}{\textbf{Output}:}}

\date{}

	%

\maketitle

\begin{abstract}
Proportional representation (PR) is often discussed in voting settings as a major desideratum. For the past century or so, it is common both in practice and in the academic literature to jump to single transferable vote (STV) as the solution for achieving PR. 
Some of the most prominent electoral reform movements around the globe are pushing for the adoption of STV.
It has been termed a major open problem to design a voting rule that satisfies the same PR properties as STV and better monotonicity properties. 
In this paper, we first present a taxonomy of proportional representation axioms for general weak order preferences, some of which generalise and strengthen previously introduced concepts. 
We then present a rule called Expanding Approvals Rule (EAR) that satisfies properties stronger than the central PR axiom satisfied by STV,  can handle indifferences in a convenient and computationally efficient manner, and also satisfies better candidate monotonicity properties. 
In view of this, our proposed rule seems to be a compelling solution for achieving proportional representation in voting settings. 
\end{abstract}

	\keywords{committee selection \and multiwinner voting \and proportional representation \and single transferable vote.\\}

\noindent
\textbf{JEL Classification}: C70 $\cdot$ D61 $\cdot$ D71

\section{Introduction}




\begin{quote}
Of all modes in which a national representation can possibly be constituted, this one [STV] affords the best security for the intellectual qualifications desirable in the representatives---John Stuart Mill (Considerations on Representative Government, 1861).
	\end{quote}

\begin{quote}
	A major unsolved problem is whether there exist rules that retain the important political features of STV and are also more monotonic---\citet{Wood97a}.
	\end{quote}
	
	We consider a well-studied voting setting in which $n$ voters express ordinal preferences over $m$ candidates and based on the preferences $k\leq m$ candidates are selected. The candidates may or may not be from particular parties but voters express preferences directly over individual candidates.\footnote{The setting is referred to as a preferential voting system. It is more general and flexible than settings in which voters vote for their respective parties and then the number of seats apportioned to the parties is proportional to the number of votes received by the party~\citep{Puke14a}. } 
This kind of voting problem is not only encountered in parliamentary elections but to form any kind of representative body. When making such a selection by a voting rule, a desirable requirement is that of proportional representation. Proportional representation stipulates that voters should get representation in a committee or parliament according to the strengths of their numbers. It is widely accepted that proportional representation is the fairest way to reflect the diversity of opinions among the voters.\footnote{Proportional representation may be the fairest way for representation but it also allows for extreme group to have some representation at least when the group is large enough. PR also need not be the most effective approach to a stable government. \citet{Blac58a} wrote that ``It [PR] makes it difficult to form a cabinet which can command a parliamentary majority and so makes for weak government.''}

 For the last 120 years or so, the most widely used and accepted way to achieve it is via single transferable vote (STV)~\citep{Blac58a,TiRi00a} and its several variants. In fact STV is used for elections in several countries including Australia, Ireland, India, and Pakistan. 
 It is also used to select representative committees in hundreds of settings including professional organisations, scientific organizations, political parties, school groups, and university student councils all over the globe.\footnote{Notable uses of STV include Oscar nominations, internal elections of the British Liberal Democrats, and selection of Oxford Union, Cambridge Union, and Harvard/Radcliffe Undergraduate Councils.} 

The reason for the widespread adoption of STV is partly due to the fact that it has been promoted to satisfy  proportional representation axioms. In particular, STV satisfies a key PR axiom called \emph{Proportionality for Solid Coalitions} (PSC)~\citep{TiRi00a,Wood94a}. \citet{Tide95a} argues that ``\emph{It is the fact that STV satisfies PSC that justifies describing STV as a system of proportional representation.}'' \citet{Wood97a} also calls the property the ``\emph{essential feature of STV, which makes it a system of proportional representation}.''
 \citet{Dumm84a} motivated PSC as a minority not requiring to coordinate its report and that it should deserve some high preferred candidates to be selected as long as enough voters are `solidly committed to such candidates.' PSC captures the idea that as long as voters have the same top candidates (possibly in different ordering), they do not need to coordinate their preferences to get a justified number of such candidates selected.
In this sense PSC is also similar in spirit to the idea that if a clone of a candidate is introduced, it should not affect the selection of the candidate. 
 Voters from the same party not having to coordinate their reports as to maximize the number of winners from their own party can be viewed as a weak form of group-strategyproofness. PSC can also be seen as a voter's vote not being wasted due to lack of coordination with like-minded voters. 
PSC has been referred to as ``a sine qua non for a fair election rule'' by \citet{Wood94a}.\footnote{There are two PSC axioms that differ in only whether the Hare quota is used or whether the Droop quota is used. The one with respect to the Droop quota has also been referred to as DPC (Droop's proportionality criterion)~\citep{Wood94a}. 
\citet{Wood94a} went as far as saying that ``I assume that no member of the Electoral Reform Society will be satisfied with anything that does not satisfy DPC.'' } 

Although STV is not necessarily the only rule satisfying PR properties, it is at times synonymous with proportional representation in academia and policy circles. The outcome of STV can also be computed efficiently which makes it suitable for large scale elections.\footnote{Although it is easy to compute one outcome of STV, checking whether a certain set is a possible outcome of STV is NP-complete~\citep{CRX09a}.} Another reason for the adoption of STV is historical. Key figures proposed ideas related to STV or pushed for the adoption of STV. The ideas behind STV can be attributed to several thinkers including C. Andrae, T. Hare
, H. R. Droop, and T. W. Hill. For a detailed history of the development of STV family of rules, please see the article by \citet{Tide95a}. In a booklet, \citet{Aiya30a} explains the rationale behind different components of the STV rule. 
STV was supported by influential intellectuals such as 
John Stuart Mill 
who placed STV ``\emph{among the greatest improvements yet made in the theory and practice of government}.'' \citet{BoGr00a} note the British influence on the spread of STV among countries with historical association with Great Britain. 

With historical, normative, and computational motivation behind it, STV has become the  `go to' rule for PR and has strong support.\footnote{One notable exception was philosopher Michael Dummett who was a stringent critic of STV. He proposed a rival PR method called the Quota Borda System (QBS) and pushed its case~\citep{Dumm84a,Dumm97a}. However, even he agreed that in terms of achieving PR, ``[STV] guarantees representation for minorities to the greatest degree to which any possible electoral system is capable of doing''~\citep{Dumm97a}[page 137].} It is also vigorously promoted by prominent electoral reform movements across the globe including the Proportional Representation Society of Australia (\url{http://www.prsa.org.au}) and the Electoral Reform Society (\url{https://www.electoral-reform.org.uk}).

Despite the central position of STV, it is not without some flaws. It is well-understood that it violates basic monotonicity properties even when selecting a single candidate~\citep[see e.g., ][]{DoKr77a,Zwic15a}. Increasing the ranking of the winning candidate may result in the candidate not getting selected. STV is also typically defined for strict preferences which limits its ability to tackle more general weak orders. 
There are several settings where voters may be indifferent between two candidates because  the candidates have the same characteristics that the voter cares about. It could also be that the voter does not have the cognitive power or time to distinguish between two candidates and does not wish to break ties arbitrarily. 
It is not clearly resolved in the literature how STV can be extended to handle weak orders without compromising on its computational efficiency or some of the desirable axiomatic properties it satisfies.\footnote{\citet{Hill01a} and \citet{Meek94a} propose one way to handle indifferences however, this leads to an algorithm that may take time $O(m!)$.}
The backdrop of this paper is that improving upon STV in terms of both PR as well as monotonicity has been posed as a major challenge~\citep{Wood97a}. 




\bigskip

\paragraph{Contributions}




We propose a new voting rule called \emph{Expanding Approvals Rule (\newrule)} that has several advantages.
(1) It satisfies an axiom called Generalised PSC that is stronger than PSC. (2) It satisfies some natural monotonicity criteria that are not satisfied by STV.
(3) It is defined on general weak preferences rather than just for strict preferences and hence constitutes a flexible and general rule that finds a suitable outcome in polynomial time for both strict and dichotomous preferences. Efficient computation of a rule is an important concern when we deal with election of large committees. 

Our work also helps understand the specifications under which different variants of STV satisfy different PR axioms. 
Apart from understanding how far STV and \newrule satisfy PR axioms, one of the conceptual contributions of this paper is to define a taxonomy of PR axioms based on PSC and identify their relations with each other. In particular, we propose a new axiom for weak preferences called \emph{Generalised PSC} that simultaneously generalises PSC (for strict preferences) and proportional justified representation (for dichotomous preferences).



\section{Model and Axioms}

In this section, we lay the groundwork of the paper by first defining the model and then formalizing the central axioms by which proportional representation rules are judged. 

\subsection{Model}

We consider the standard social choice setting with a set of voters $N=\{1,\ldots, n\}$, a set of candidates $C=\{c_1,\ldots, c_m\}$ and a preference profile $\pref=(\pref_1,\ldots,\pref_n)$ such that each $\pref_i$ is a complete and transitive relation over $C$. 
Based on the preference profile, the goal is to select a committee $W\subset C$ of predetermined size $k$.
Since our new rule is defined over weak orders rather than strict orders, we allow the voters to express weak orders.

			 			We write~$a \pref_i b$ to denote that voter~$i$ values candidate~$a$ at least as much as candidate~$b$ and use~$\spref_i$ for the strict part of~$\pref_i$, i.e.,~$a \spref_i b$ iff~$a \pref_i b$ but not~$b \pref_i a$. Finally, $\indiff_i$ denotes~$i$'s indifference relation, i.e., $a \indiff_i b$ if and only if both~$a \pref_i b$ and~$b \pref_i a$.
			 			The relation $\pref_i$ results in (non-empty) equivalence classes $E_i^1,E_i^2, \ldots, E_i^{m_i}$ for some $m_i$ such that $a\spref_i a'$ if and only if $a\in E_i^l$ and $a'\in E_i^{l'}$ for some $l<l'$. Often, we will use these equivalence classes to represent the preference relation of a voter as a preference list
			 			$i\midd E_i^1,E_i^2, \ldots, E_i^{m_i}$. If candidate $c$ is in $E_i^j$, we say it has \emph{rank} $j$ in voter $i$'s preference. 
			 		For example, we will denote the preferences $a\indiff_i b\spref_i c$ by the list $i:\ \{a,b\}, \{c\}$. 
					If each equivalence is of size $1$, the preferences will be called strict preferences or linear orders. Strict preferences will be represented by a comma separated list of candidates. If for each voter, the number of equivalence classes is at most two, the preferences are referred to as dichotomous preferences. When the preferences of the voters are dichotomous, the voters can be seen as approving a subset of voters. In this case for each voter $i\in N$, the first equivalence class $E_i^1$ is also referred to as the approval set $A_i$. The vector $A=(A_1,\ldots, A_n)$ is referred to as the \emph{approval ballot profile}. \harisnew{Since our model concerns ordinal preferences, when a voter $i$ is completely indifferent between all the candidates, it means that $E_i^1=C$ and we do not ascribe any intensity with which these candidates are liked or disliked by voter $i$. None of our results connecting dichotomous preferences with approval-based committee voting are dependent on how complete indifference is interpreted in terms of all approvals versus all disapprovals.}
					
The model allows for voters to express preference lists that do not include some candidates. In that case, the candidates not included in the list will be assumed to form the last equivalence class.

%


\subsection{PR under Strict Preferences}

In order to understand the suitability of voting rules for proportional representation, we recap the central PR axiom from the literature. It was first mentioned and popularised by \citet{Dumm84a}. It is defined for strict preferences.

\begin{definition}[Solid coalition]
A set of voters $N'$ is a \emph{solid coalition} for a set of candidates $C'$ if every voter in $N'$ strictly prefers every candidate in $C'$ ahead of every candidate in $C\backslash C'$. That is, for all $i\in N'$ and for any $c'\in C'$
$$\forall c\in C\backslash C' \quad c'\spref_i c. $$
The candidates in $C'$ are said to be supported by voter set $N'$.
\end{definition}

\bartonnew{Importantly, the definition of a solid coalition does not require voters to maintain the same order of strict preferences among candidates in $C'$ nor $C\backslash C'$. Rather the definition requires only that all candidates in $C'$ are strictly preferred to those in $C\backslash C'$. Also notice that a set of voters $N'$ may be a solid coalition for multiple sets of candidates and that the entire set of voters $N$ is trivially a solid coalition for the set of all candidates $C$.} 



\bartonnew{
\begin{definition}[$q$-PSC]
Let $q\in (n/(k+1), n/k]$. We say a committee $W$ satisfies $q$-PSC if for every positive integer $\ell$, and for every solid coalition $N'$ supporting a candidate subset $C'$ with size $|N'|\ge \ell q$, the following holds
$$|W\cap C'|\ge \min\{\ell, |C'|\}.$$
\end{definition}
}

 If $q=n/k$, then we refer to the property as Hare-PSC. If $q=n/(k+1)+\epsilon$ for small $\epsilon>0$ , then we refer to the property as Droop-PSC.\footnote{Droop PSC is also referred to as Droop's proportionality criterion (DPC). Technically speaking the Droop quota is $\floor{n/(k+1)}+1$. The exact value $n/(k+1)$ is referred to as the  \emph{Hagenbach-Bischoff} quota.} 
 
There are some reasons to prefer the `Droop' quota ${n/(k+1)}+\epsilon$ for small $\epsilon>0$. Firstly, for $k=1$ the use of the Droop quota leads to rules that return a candidate that is most preferred by more than half of the voters. Secondly, STV defined with respect to the Droop quota ensures slight majorities get slight majority representation. 
Hare-PSC was stated as an essential property that a rule designed for PR should satisfy~\citep{Dumm84a}. 
When preferences are strict and $k=1$, \citet{Wood97a} refers to the restriction of Droop-PSC under these conditions as the \emph{majority principle}.  The majority principle requires that if a majority of voters are solidly committed to a set of candidates $C'$, then one of the candidates from $C'$ must be selected.


%
%

\begin{example}
									Consider the profile with 9 voters and where $k=3$. Then the voters in set $N'=\{1,2,3\}$ form a solid coalition with respect to Hare quota who support candidates in $\{c_1,c_2,c_3,c_4\}$. The voters in set $N''=\{4,5,6,7,8,9\}$ form a solid coalition with respect to Hare quota who support three candidate subsets $\{e_1\}$, $\{e_1, e_2\}$ and $\{e_1, e_2, e_3\}$. 
								\begin{align*}
														1:&\quad c_1, c_2, c_3, c_4,... \\
														2:&\quad c_4, c_1, c_2, c_3,...\\
														3:&\quad c_2, c_3, c_4, c_1,... \\
														4:&\quad e_1, e_2, e_3, ...\\
														5:&\quad e_1, e_2, e_3,  ...\\
														6:&\quad  e_1, e_2, e_3, ...\\
														7:&\quad e_1, e_2, e_3, ...\\
														8:&\quad  e_1, e_2, e_3, ...\\
														9:&\quad  e_2, e_1, e_3, ...
	\end{align*}
	\end{example}

One can also define a weak version of PSC. In some works~\citep[see, e.g.,][]{EFSS14a, EFSS17a,FSST17a}, the weaker version has been attributed to the original definition of PSC as defined by Dummett. For example, \citet{FSST17a} in their Definition 2.9 term \emph{weak} PSC as the property put forth by Dummett although he advocated the stronger property of PSC. 


\bartonnew{
\begin{definition}[weak $q$-PSC]
Let $q\in (n/(k+1), n/k]$. A committee $W$ satisfies weak $q$-PSC if for every positive integer $\ell$, and for every solid coalition $N'$ supporting a candidate subset $C'\, :\, |C'|\le \ell$ with size $|N'|\ge \ell q$, the following holds. 
$$|W\cap C'|\ge \min\{\ell, |C'|\}.$$
\end{definition}
}

For weak $q$-PSC, we restrict our attention to solid coalitions who support sets of candidates of size at most $\ell$ whereas in $q$-PSC we impose no such restriction. 
Note that $q$-PSC implies weak $q$-PSC but the reverse need not hold. Furthermore, the condition that $|C'|\le \ell$ and $|W\cap C'|\ge \min\{\ell, |C'|\}$ is equivalent to $C'\subseteq W$. We also note that under strict preferences and $k=1$, if a majority of the voters have the same most preferred candidate, then weak Droop-PSC implies that the candidate is selected. In particular, this implies that the majority principle is satisfied when weak-Droop PSC is satisfied.  

We now present a lemma connecting (weak) $q$-PSC for different values of $q$. The proof is omitted since it is implied by a stronger lemma (Lemma~\ref{Lemma: q and q' generalised PSC}) proven in section~\ref{Section: Generalised PSC axioms}. 

\begin{lemma}\label{Lemma: q and q' PSC}
Let $q, q'$ be real numbers such that $q<q'$.  If a committee $W$ satisfies (weak) $q$-PSC then $W$ satisfies (weak) $q'$-PSC.
\end{lemma}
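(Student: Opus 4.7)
The plan is to give a direct implication argument using only the definitions. Suppose $W$ satisfies $q$-PSC and let $q < q'$. To establish $q'$-PSC for $W$, I would take an arbitrary positive integer $\ell$ and an arbitrary solid coalition $N'$ supporting a candidate set $C'$ with $|N'| \ge \ell q'$, and then show that $|W \cap C'| \ge \min\{\ell, |C'|\}$.

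The key observation is that the hypothesis $|N'| \ge \ell q'$ together with $q' > q$ implies $|N'| \ge \ell q' > \ell q$, so the same solid coalition $N'$ also witnesses the premise of $q$-PSC for the same integer $\ell$ and same set $C'$. Invoking $q$-PSC (applied to this $\ell$, $N'$, $C'$) then yields exactly the desired conclusion $|W \cap C'| \ge \min\{\ell, |C'|\}$. Since $\ell$, $N'$, and $C'$ were arbitrary subject to the premise of $q'$-PSC, this establishes $q'$-PSC.

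For the weak version, the argument is identical with the single extra requirement that $|C'| \le \ell$, which is part of the premise of weak $q'$-PSC and is preserved unchanged when we invoke weak $q$-PSC — so no additional work is needed. There is no real obstacle here: the lemma is essentially a tautology about threshold monotonicity of the quota parameter, and the proof is a one-line chain of inequalities followed by citing the hypothesis. This is presumably why the authors defer to the stronger Lemma~\ref{Lemma: q and q' generalised PSC} rather than writing it out separately, since the generalised-PSC version will subsume this as a special case.
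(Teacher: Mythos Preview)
Your proposal is correct and is essentially identical to the paper's argument: the paper's proof of Lemma~\ref{Lemma: q and q' generalised PSC} (to which Lemma~\ref{Lemma: q and q' PSC} is deferred) is precisely the observation that $|N'|\ge \ell q'$ implies $|N'|\ge \ell q$, so every coalition triggering a $q'$-PSC demand already triggers the same $q$-PSC demand. Your handling of the weak version and your remark about why the authors defer to the generalised lemma are also accurate.
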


%

%

\begin{remark}
{\color{black} In this paper we focus on PR axioms related to $q$-PSC where $q$ is a real number contained in the interval $(\frac{n}{k+1}, \frac{n}{k}]$. The reason for focusing on these values is that a committee satisfying $q$-PSC committee is guaranteed to exists for any preference profile when $q>\frac{n}{k+1}$. Whilst whenever $q\le \frac{n}{k}$ a \emph{unanimity}-like principle is satisfied. That is,  if all voters form a solid coalition for a size-$k$ candidate subset $C'$ then there is a unique committee satisfying $q$-PSC, i.e, $W=C'$. However in principle, the PR axioms and many results in this paper can be considered with values of $q$ outside of the interval $(\frac{n}{k+1}, \frac{n}{k}]$.}
\end{remark}

\bigskip

In section~\ref{Section: Generalised PSC axioms}, we generalise the PSC property to weak preferences which has not been done in the literature.


 \subsection{Candidate Monotoncity Axioms}

PR captures the requirement that cohesive groups of voters should get sufficient representation. Another desirable property is \emph{candidate monotonicity} that requires that \bartonnew{increased support for an otherwise-elected candidate should never cause this candidate to become unelected.} Candidate monotonicity involves the notion of a candidate being reinforced. 
\harisnew{We say a candidate is \emph{reinforced} if its relative position is improved while not changing the relative positions of all other candidates. }
\harisnew{More formally, we say that candidate $c$ is \emph{reinforced} in preference $\pref_i$ to obtain preference $\pref_i'$, if (1)
$c \pref_i d \implies c \pref_i' d$ for all $d\in C\setminus \{c\}$; (2)
$d \pref_i' c \implies d \pref_i c$ for all $d\in C\setminus \{c\}$; (3)
there exists a $d\in C$ such that $d \pref_i c$ and $c \succ_i' d$ (in this case $c$ is said to cross over $d$); and
(4) $d \pref_i e \implies d \pref_i' e$ for all $d,e \neq c$.
}
We are now in a position to formalise some natural candidate monotonicity properties of voting rules. The definitions apply not just to strict preferences but also to weak preferences. \harisnew{One of the definitions (RRCM) is based on the ranks of candidates as specified in the preliminaries.} 



\begin{definition}[Candidate Monotonicity]
	\noindent
 \begin{itemize}
 	\item \emph{Candidate Monotonicity (CM)}: if a winning candidate is reinforced {\color{black} by a single voter}, it remains a winning candidate.
 	\item \emph{Rank Respecting Candidate Monotonicity (RRCM)}: if a winning candidate $c$ is reinforced {\color{black} by a single voter} without changing the respective ranks of other winning candidates in each voter's preferences, then $c$ remains a winning candidate. 
 \item \emph{Non-Crossing Candidate Monotonicity (NCCM)}: if a winning candidate $c$ is reinforced {\color{black} by a single voter} without ever crossing over another winning candidate, then $c$ remains a winning candidate. 
 \item \emph{Weak Candidate Monotonicity (WCM)}: if a winning candidate is reinforced {\color{black} by a single voter}, then some winning candidate still remains winning. 
 \end{itemize}
 \end{definition}


NCCM and WCM are extremely weak properties but STV violates them even for $k=1$.  
We observe the following relations between the properties. 
 \begin{proposition}
 	The following relations hold. 
	
 	\begin{itemize}
\item  \harisnew{\textnormal{RRCM} and \textnormal{NCCM} are equivalent for linear orders.} 
\item $\textnormal{CM} \implies \textnormal{RRCM} \implies \textnormal{NCCM}$.
\item $\textnormal{CM} \implies \textnormal{WCM}$.
		\item Under $k=1$, \textnormal{WCM}, \textnormal{RRCM}, \textnormal{NCCM}, and \textnormal{CM} are equivalent. 
		\item Under dichotomous preferences, \textnormal{RRCM} and \textnormal{CM} are equivalent.
 	\end{itemize}
 	\end{proposition}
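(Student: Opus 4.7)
The plan is to verify each of the five bullets by comparing the premises of the axioms and checking containment of the corresponding sets of reinforcements. First I would handle the equivalence of RRCM and NCCM under linear orders. Since each equivalence class in a strict ranking is a singleton, the rank of a candidate $d$ equals its position. A reinforcement of $c$ on a strict profile moves $c$ upward by adjacent swaps, and $d$ is crossed by $c$ precisely when $c$ passes over $d$, which is also precisely when $d$'s position (and hence rank) shifts by one. Thus ``no winner is crossed by $c$'' and ``no other winner's rank changes'' define the same set of reinforcements on strict profiles, so the two axioms coincide.

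Next I would establish the chain $\textnormal{CM} \implies \textnormal{RRCM} \implies \textnormal{NCCM}$. The first implication is immediate because CM imposes its conclusion on every reinforcement, a class strictly larger than those reinforcements that leave other winners' ranks unchanged. For $\textnormal{RRCM} \implies \textnormal{NCCM}$ my approach is to show that the premise of NCCM (no winner is crossed by $c$) entails the premise of RRCM (no other winner's rank changes), after which RRCM's conclusion directly supplies NCCM's. I would proceed by a case analysis on the initial relationship between $c$ and a fixed winner $d$ in $\pref$, namely $c \succ d$, $c \indiff d$, or $d \succ c$, using conditions (1)--(4) of the reinforcement definition to track how the equivalence class containing $d$ and the classes strictly above it evolve in $\pref'$. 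The main obstacle lies precisely here: in weak orders the class-index of $d$ could in principle shift because $c$ migrates into or merges with a class above $d$, even without $c$ strictly surpassing $d$. Ruling this out requires carefully combining condition (4) on preservation of the non-$c$ structure with the assumption that $c$ does not cross $d$, so that $c$'s new equivalence class either sits strictly below $d$ or coincides with $d$'s own class, yielding no change in the class-index of $d$.

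For $\textnormal{CM} \implies \textnormal{WCM}$ the argument is a one-liner: retaining $c$ as a winner trivially retains at least one original winner, namely $c$ itself. For the $k=1$ equivalence, I would observe that with only one winner there are no ``other winners'' in the profile, making the premises of both RRCM and NCCM vacuous, so both axioms reduce to CM; WCM also collapses to CM because the unique initial winner must be the one that is retained. Finally, for dichotomous preferences each voter's preference has at most two equivalence classes, so any valid reinforcement of $c$ is essentially a shift of $c$ from the lower class to the upper one (the only move that strictly crosses some candidate on each voter's side). Such a shift leaves the class-index of every non-$c$ candidate intact, hence RRCM's premise is always met and the axiom reduces to CM.
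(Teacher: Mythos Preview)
The paper states this proposition as an observation without supplying a proof, so your proposal must stand on its own. The arguments for the first bullet, for $\textnormal{CM}\Rightarrow\textnormal{RRCM}$, for $\textnormal{CM}\Rightarrow\textnormal{WCM}$, and for the $k=1$ collapse are all sound.

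The genuine gap is in your argument for $\textnormal{RRCM}\Rightarrow\textnormal{NCCM}$ under weak orders. Your strategy is to show that the NCCM premise (no winner is crossed by $c$) forces the RRCM premise (no other winner's class index changes), and your case analysis concludes that ``$c$'s new equivalence class either sits strictly below $d$ or coincides with $d$'s own class.'' This covers only the cases $d\succ_i c$ and $c\sim_i d$. When $c\succ_i d$ already, $c$ cannot cross $d$ by definition, yet $c$ may still cross a \emph{non}-winner lying above $d$ and thereby insert a new equivalence class above $d$. Concretely, take $E_i^1=\{c,a\}$, $E_i^2=\{d\}$ with $a\notin W$ and $d\in W$; the reinforcement to $E_i'^1=\{c\}$, $E_i'^2=\{a\}$, $E_i'^3=\{d\}$ satisfies conditions (1)--(4), crosses no winning candidate, but shifts $d$'s rank from $2$ to $3$. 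Hence the containment of premises you rely on fails in this case, and the implication requires a different argument than the one you outline.

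A smaller issue appears in the dichotomous bullet. You assert that every dichotomous reinforcement moves $c$ from the lower class into the upper class, leaving all other class indices intact. But if the voter has a single equivalence class $E_i^1=C$, reinforcing $c$ yields $\{c\}$, $C\setminus\{c\}$, which is still dichotomous yet moves every other candidate from rank $1$ to rank $2$. You need either to exclude this corner case explicitly or to supply a separate argument covering it before concluding that RRCM and CM coincide on dichotomous profiles.
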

	
	Note that if a rule fails CM for $k=1$, then it also fails RRCM, NCCM, and WCM.

\section{PR under generalised preference relations}\label{Section: Generalised PSC axioms}

The notion of a solid coalition and PSC can be generalised to the case of weak preferences. In this section, we propose a new axiom called \emph{generalised PSC} which not only generalises PSC (that is only defined for strict preferences) but also \emph{Proportional Justified Representation} (PJR) a PR axiom that is only defined for dichotomous preferences.

\begin{definition}[Generalised solid coalition]
A set of voters $N'$ is a \emph{generalised solid coalition} for a set of candidates $C'$ if every voter in $N'$ weakly prefers every candidate in $C'$ at least as high as every candidate in $C\backslash C'$. That is, for all $i\in N'$ and for any $c'\in C'$
$$\forall c\in C\backslash C' \quad c'\pref_i c. $$
\end{definition}

We note that under strict preferences, a generalised solid coalition is equivalent to solid coalition. 
Let $c^{(i,j)}$ denotes voter $i$'s $j$-th most preferred candidate. In case the voter's preference has indifferences, we use lexicographic tie-breaking to identify the candidate in the $j$-th position.


\harisnew{
\begin{definition}[Generalised $q$-PSC]
Let $q\in (n/(k+1), n/k]$. A committee $W$ satisfies \emph{generalised $q$-PSC} if for all generalised solid coalitions $N'$ supporting candidate subset $C'$ with size $|N'|\ge \ell q$, there exists a set $C''\subseteq W$ with size at least $\min\{\ell, |C'|\}$ such that for all $c''\in C''$
$$\exists i\in N'\, : \quad c''\pref_i c^{(i, |C'|)}.$$
\end{definition}
}

\harisnew{The idea behind generalised $q$-PSC is identical to that of $q$-PSC and in fact generalised $q$-PSC is equivalent to $q$-PSC under linear preferences. 
Note that in the definition above, a voter $i$ in the solid coalition of voters $N'$ does not demand membership of candidates from the solidly supported subset $C'$ but of any candidate that is at least as preferred as a least preferred candidate in $C'$. Generalised weak $q$-PSC is a natural weakening of generalised $q$-PSC in which we require that $C'$ is of size at most $\ell$.}

\bartonnew{
\begin{definition}[Generalised weak $q$-PSC]
Let $q\in (n/(k+1), n/k]$. A committee $W$ satisfies weak \emph{generalised $q$-PSC} if for every positive integer $\ell$, and every generalised solid coalition $N'$ supporting a candidate subset $C': \, |C'|\le \ell$ with size $|N'|\ge \ell q$, there exists a set $C''\subseteq  W$ with size at least $\min\{\ell, |C'|\}$ such that for all $c''\in C''$
$$\exists i\in N'\, : \quad c''\pref_i c^{(i, |C'|)}.$$
\end{definition}
}



The following example shows that generalised $q$-PSC is a weak property when solid coalitions equal, or just barely exceed, the quota $q$. 

\begin{example}
Let $N=\{1, 2, 3, 4\}$, $C=\{a, b, \ldots, j\}$, $k=2$, and suppose voter $1$ and $2$'s preferences are given as follows:
\begin{align*}
1:&\qquad \{a,b,\ldots, j\}\\
2:&\qquad a, b,  \ldots , j
\end{align*}
We consider generalised PSC with respect to the Hare quota; that is, $q_H=n/k=2$. There is a generalised solid coalition $N'=\{1, 2\}$ with $|N'|\ge q_H$ supporting candidate subset $C'=\{a\}$. The generalised $q_H$-PSC axiom requires the election of $\ell=1$ candidates into $W$ who are at least as preferred as either voter $1$ or $2$'s most preferred candidate. Since voter $1$ is indifferent between all candidates, electing any candidate such as $j\in C$, will satisfy the axiom -- this is despite candidate $j$ being voter $2$'s strictly least preferred candidate.
\end{example}



We now show that if a committee $W$ satisfies generalised (weak) $q$-PSC then the committee also satisfies generalised (weak) $q'$-PSC for all $q'>q$.

\begin{lemma}\label{Lemma: q and q' generalised PSC}
Let $q, q'$ be real numbers such that $q<q'$.  If a committee $W$ satisfies generalised (weak) $q$-PSC then $W$ satisfies generalised (weak) $q'$-PSC.
\end{lemma}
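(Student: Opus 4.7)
The plan is to prove the lemma by a direct unpacking of the definitions and a simple monotonicity observation: a larger quota imposes a weaker hypothesis on coalitions, so any property that holds for all coalitions meeting the smaller quota automatically holds for the smaller class of coalitions meeting the larger quota. I expect no real obstacle; the argument is essentially a one-line inequality plus a careful check that the conclusion required by generalised $q'$-PSC is literally the same as the conclusion delivered by generalised $q$-PSC.

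In more detail, I would first fix arbitrary data witnessing the hypothesis of generalised $q'$-PSC: a positive integer $\ell$, a generalised solid coalition $N'$ supporting a candidate subset $C'$ with $|N'| \ge \ell q'$. The goal is to produce a set $C'' \subseteq W$ with $|C''| \ge \min\{\ell, |C'|\}$ such that for every $c'' \in C''$ there exists some $i \in N'$ with $c'' \pref_i c^{(i, |C'|)}$. Since $q < q'$ and $\ell \ge 1$, I would observe the chain
\[
|N'| \;\ge\; \ell q' \;>\; \ell q,
\]
so in particular $|N'| \ge \ell q$. Thus the triple $(N', C', \ell)$ also satisfies the hypothesis of generalised $q$-PSC.

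Invoking the assumption that $W$ satisfies generalised $q$-PSC on this triple produces a set $C'' \subseteq W$ with $|C''| \ge \min\{\ell, |C'|\}$ satisfying exactly the condition that for every $c'' \in C''$ there exists $i \in N'$ with $c'' \pref_i c^{(i, |C'|)}$. This is precisely the conclusion required by generalised $q'$-PSC for the fixed $\ell$, $N'$, $C'$. Since $(N', C', \ell)$ was arbitrary, generalised $q'$-PSC holds.

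For the weak variant, the argument is identical: the hypothesis of generalised weak $q'$-PSC additionally demands $|C'| \le \ell$, but this constraint appears unchanged in generalised weak $q$-PSC, so the same reduction applies verbatim. Hence the implication holds for both the weak and non-weak versions, completing the proof.
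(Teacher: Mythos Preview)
Your proof is correct and follows essentially the same approach as the paper: both arguments observe that $|N'|\ge \ell q'$ together with $q<q'$ implies $|N'|\ge \ell q$, so any coalition triggering the $q'$-PSC hypothesis already triggers the $q$-PSC hypothesis with the identical conclusion. Your version is simply a more detailed unpacking of the same one-line monotonicity observation the paper gives.
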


\begin{proof}
Let $q<q'$ and suppose that the committee $W$ satisfies generalised (weak) $q$-PSC. We wish to show that (weak) generalised $q'$-PSC is also satisfied by $W$.  To see this, notice that any generalised solid coalition $N'$ requiring representation under generalised (weak) $q'$-PSC also requires at least as much representation under generalised (weak) $q$-PSC since $|N'|\ge \ell q'$ implies that $|N'|\ge \ell q$.
\end{proof}

\harisnew{Under linear orders, generalised PSC and generalised weak PSC coincide respectively with PSC and weak PSC.} 
Generalising PSC to the case of weak preferences is important because it provides a useful link with PR properties defined on dichotomous preferences. Proportional Justified Representation (PJR)~\citep{SFF+17a, AzHu16a} is a proportional representation property for dichotomous preferences~\citep{ABC+16a}.

Recall the following definition of PJR:

\begin{definition}[PJR]
A committee $W$ with $|W|=k$ satisfies PJR for an approval ballot profile $\boldsymbol{A}=(A_1, \ldots, A_n)$ over a candidate set $C$ if for every positive integer $\ell\le k$ there does not exists a set of voters $N^*\subseteq N$ with $|N^*|\ge \ell \frac{n}{k}$ such that 
$$\big|\bigcap_{i\in N^*} A_i\big|\ge \ell \qquad \text{but} \qquad \big|\big(\bigcup_{i\in N^*} A_i\big)\cap W\big|<\ell.$$
\end{definition}

\begin{proposition}\label{prop: Hare-PSC implies PJR}
Under dichotomous preferences, generalised weak Hare-PSC implies PJR. 
\end{proposition}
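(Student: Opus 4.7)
The plan is to prove the contrapositive: assume $W$ violates PJR, and deduce that $W$ violates generalised weak Hare-PSC. So suppose there exist $\ell \le k$ and $N^*\subseteq N$ with $|N^*|\ge \ell\,n/k$ such that $|\bigcap_{i\in N^*} A_i|\ge \ell$ but $|(\bigcup_{i\in N^*} A_i)\cap W|<\ell$. The first step is to pick an arbitrary subset $C'\subseteq \bigcap_{i\in N^*} A_i$ with $|C'|=\ell$. Because every voter $i\in N^*$ approves every candidate in $C'$ and their preferences are dichotomous, each such voter weakly prefers every member of $C'$ to every member of $C\setminus C'$. Hence $N^*$ is a generalised solid coalition supporting $C'$, and $|N^*|\ge \ell\,(n/k)=\ell q_H$, with $|C'|=\ell$, so the hypothesis of generalised weak Hare-PSC is triggered at parameter $\ell$.

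Next I would invoke generalised weak Hare-PSC to obtain a set $C''\subseteq W$ with $|C''|\ge \min\{\ell,|C'|\}=\ell$, such that for each $c''\in C''$ there is some $i\in N^*$ with $c''\pref_i c^{(i,|C'|)}=c^{(i,\ell)}$. The main structural step is then to argue that $C''\subseteq \bigcup_{i\in N^*} A_i$. For any $i\in N^*$, since $C'\subseteq A_i$ and $|C'|=\ell$, the voter has at least $\ell$ candidates tied at the top equivalence class $A_i$, so under the lexicographic tie-breaking convention $c^{(i,\ell)}\in A_i$. Given that voter $i$'s preference is dichotomous with top class $A_i$, any candidate $c''$ satisfying $c''\pref_i c^{(i,\ell)}$ must itself lie in $A_i$. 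Therefore each $c''\in C''$ lies in some $A_i$ with $i\in N^*$, giving $C''\subseteq \bigcup_{i\in N^*} A_i$.

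Combining the two conclusions yields
\[
\Bigl|\Bigl(\bigcup_{i\in N^*} A_i\Bigr)\cap W\Bigr|\ \ge\ |C''|\ \ge\ \ell,
\]
contradicting the assumed PJR violation. This completes the contrapositive, and hence the proposition.

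The only nontrivial step is the inclusion $C''\subseteq \bigcup_{i\in N^*} A_i$, and this is precisely where dichotomous preferences are used in an essential way: the generalised weak PSC conclusion only promises that committee members are weakly preferred to the $\ell$-th ranked candidate of some voter in the coalition, and it is the flatness of dichotomous preferences that forces such candidates to actually lie in that voter's approval set. I do not foresee any additional difficulty; the argument is essentially a careful bookkeeping of how generalised PSC specialises under dichotomous orders.
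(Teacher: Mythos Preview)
Your proposal is correct and follows essentially the same approach as the paper's proof: both select $C'\subseteq\bigcap_{i\in N^*}A_i$ of size exactly $\ell$, observe that $N^*$ is a generalised solid coalition for $C'$, apply generalised weak Hare-PSC to obtain $C''\subseteq W$ of size at least $\ell$, and then use the fact that $c^{(i,\ell)}\in A_i$ under dichotomous preferences to conclude $C''\subseteq\bigcup_{i\in N^*}A_i$, yielding the contradiction. The paper additionally remarks that no $i\in N^*$ can be completely indifferent (i.e., $A_i\neq\emptyset,C$), but this is only to make the argument independent of how total indifference is encoded as an approval ballot, and your argument goes through without it.
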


\begin{proof}
For the purpose of a contradiction, let $W$ be a committee of size $k$ and suppose that generalised weak Hare-PSC holds but PJR does not. If PJR does not hold, then there must exist a set $N^*$ of voters and a positive integer $\ell$ such that $|N^*|\ge \ell\frac{n}{k}=\ell q_H$ (where $q_H$ is the Hare quota) and both
	\begin{align}\label{HareImpliesPJR}
	\big|\bigcap_{i\in N^*} A_i\big|\ge \ell \qquad \text{and} \qquad \big|\big(\bigcup_{i\in N^*} A_i\big)\cap W\big|<\ell.
	\end{align}
	Note that if $i\in N^*$ it must be that $i$ is not indifferent between all candidates (i.e. $A_i\neq \emptyset, C$), otherwise (\ref{HareImpliesPJR}) cannot hold. \bartonnew{This means that the result is independent of whether a voter with preference $\pref_i$ leading to single equivalence class is defined to have preference presented via the approval ballot $A_i=\emptyset$ or $A_i=C$ (both of which induce the same single equivalence class over candidates). } 

Now it follows that $N^*$ is a generalised solid coalition for each candidate subset $C'\subseteq \bigcap_{i\in N^*} A_i$ since every candidate in $C'$ is weakly preferred to every candidate in $C$ for all $i\in N^*$. Since $\big|\bigcap_{i\in N^*} A_i\big|\ge \ell$, we can select a subset $C'$ with exactly $\ell$ candidates so that $|C'|=\ell$. 

	Thus, if generalised weak Hare-PSC holds then there exists a set $C''\subseteq W$ with size at least $ \min\{\ell, |C'|\}=\ell$ such that for all $c''\in C''$ there exists $i\in N^*\, :\, c''\pref c^{(i, \ell)}$. But note that for any voter $j\in N^*$ we have $c^{(j, \ell)}\in A_j$ and hence for this particular candidate $c''$ and voter $i\in N^*$ we have $c''\in A_i$. It follows that $C''\subseteq (\bigcup_{i\in N^*} A_i)\cap W$, and
	$$ \big|\big(\bigcup_{i\in N^*} A_i\big)\cap W\big|\ge |C''| \ge \ell,$$
		which contradicts (\ref{HareImpliesPJR}).
\end{proof}

\begin{proposition}
Under dichotomous preferences,  PJR implies generalised Hare-PSC. 
\end{proposition}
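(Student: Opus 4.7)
The plan is to assume PJR holds and verify generalised Hare-PSC directly, exploiting the simplifications that dichotomous preferences bring to both the solid-coalition condition and the witness condition, and then feeding everything into PJR. First I would unpack what a generalised solid coalition $N'$ for $C'$ means when each voter $i$'s preference has the two-class form induced by an approval set $A_i$: the requirement $c' \pref_i c$ for all $c' \in C',\, c \in C \setminus C'$ is equivalent, by a short case check on the two equivalence classes of $\pref_i$, to the dichotomy that either $C' \subseteq A_i$ or $A_i \subseteq C'$ for each $i \in N'$. I would also unpack $c^{(i,|C'|)}$: with lexicographic tie-breaking among equivalence classes, this candidate lies in $A_i$ exactly when $|A_i| \ge |C'|$, so the witness condition $c'' \pref_i c^{(i,|C'|)}$ reduces to $c'' \in A_i$ in that regime and is automatic for every $c'' \in C$ when $|A_i| < |C'|$.

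The proof then splits into two cases, pivoting on whether some $i^* \in N'$ has $|A_{i^*}| < |C'|$. In that case the dichotomy forces $A_{i^*} \subseteq C'$, the witness condition is vacuous for $i^*$, and any $\min\{\ell, |C'|\}$ candidates of $W$ serve as $C''$ (using $|W| = k \ge \ell$, where $\ell \le k$ follows from $|N'| \ge \ell n/k$ and $|N'| \le n$). In the complementary case every $i \in N'$ has $|A_i| \ge |C'|$, which combined with the dichotomy forces $C' \subseteq A_i$ for all $i \in N'$ and hence $\big|\bigcap_{i \in N'} A_i\big| \ge |C'|$. I would then invoke PJR with $N^* = N'$ and integer $\ell' := \min\{\ell, |C'|\}$: the conditions $|N^*| \ge \ell\cdot n/k \ge \ell' \cdot n/k$, $\big|\bigcap_{i \in N^*} A_i\big| \ge \ell'$, and $\ell' \le k$ are all in place, so PJR yields $\big|\big(\bigcup_{i \in N^*} A_i\big) \cap W\big| \ge \ell'$. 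Any $\ell'$-element subset of this intersection can be taken as $C''$, since each of its members lies in some $A_i$ with $i \in N'$ and is thus indifferent to, hence weakly preferred to, $c^{(i,|C'|)}$.

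The main obstacle is conceptual rather than technical: recognising that the witness condition involving $c^{(i,|C'|)}$ collapses cleanly in each regime of $|A_i|$ versus $|C'|$, and then identifying the right integer $\ell' = \min\{\ell, |C'|\}$ so that PJR's hypothesis on $\big|\bigcap A_i\big|$ is satisfied while its conclusion delivers exactly the number of winners demanded by generalised $q$-PSC. Once these reductions are in place, the degenerate case $|C'| = 0$ needs only a brief remark, and no further work is required.
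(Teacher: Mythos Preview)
Your proof is correct and follows essentially the same two-case strategy as the paper: either some voter in the coalition has a least-preferred candidate in $C'$ (your $|A_{i^*}| < |C'|$ case), making the witness condition vacuous for that voter, or else $C' \subseteq \bigcap_{i \in N'} A_i$ and PJR applies directly. Your use of $\ell' = \min\{\ell, |C'|\}$ when invoking PJR is in fact slightly more careful than the paper, which applies PJR with $\ell$ and thus tacitly assumes $|C'| \ge \ell$ in its second case.
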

	\harisnew{
		\begin{proof}
		Suppose that for dichotomous preferences,
			a committee $W$ of size $k$ satisfies PJR. 
			Then there exists no set of voters $N^*\subseteq N$ with $|N^*|\ge \ell \frac{n}{k}$ such that 
			$$\big|\bigcap_{i\in N^*} A_i\big|\ge \ell \qquad \text{but} \qquad \big|\big(\bigcup_{i\in N^*} A_i\big)\cap W\big|<\ell.$$
	Equivalently, for every set of voters $N^*\subseteq N$ with $|N^*|\ge \ell \frac{n}{k}$, the following holds:
			$$\big|\bigcap_{i\in N^*} A_i\big|\ge \ell \implies \big|\big(\bigcup_{i\in N^*} A_i\big)\cap W\big|\geq \ell.$$
	We now prove that for all generalised solid coalitions $N^*$ of size $|N^*|\ge \ell n/k=\ell q_H$ (where $q_H$ is the Hare quota) supporting candidate subset $C'$ then there exists a set $C''\subseteq W$ with size at least $\min\{\ell, |C'|\}$ such that for all $c''\in C''$
		$$\exists i\in N^*\, : \quad c''\pref_i c^{(i, |C'|)}.$$
Consider a solid coalitions $N^*$ of size $|N^*|\ge \ell n/k$ supporting candidate subset $C'$.
	\begin{enumerate}
		\item Suppose there exists some voter $i\in N^*$ who has one of her least preferred candidates $c$ in $C'$. In that case, each candidate in $C'$ is at least as preferred for $i$ as $c$. Hence the condition of genaralized Hare-PSC is trivially satisfied by any committee. 
	\item The other case is that for each $i\in N^*$ and each $c\in C'$, $c\in \max_{\pref_i}(C)$.\footnote{Here $\max_{\pref_i}(C)$ denote the equivalence class of (strictly) most preferred candidates in $C$ with respect to $\pref_i$.} Equivalently, for each $i\in N^*$ and each $c\in C'$, $c\in A_i$. Hence $C'\subseteq \bigcap_{i\in N^*}A_i$.	Since $W$ satisfies PJR, it follows that $\big|\big(\bigcup_{i\in N^*} A_i\big)\cap W\big|\geq \ell.$ In that case, we know that there exists a set $C''=\big(\bigcup_{i\in N^*} A_i\big)\subseteq W$ of size at least $\min\{\ell, |C'|\}$ such that for all $c''\in W$, $$\exists i\in N^*\, : \quad c''\pref_i c^{(i, \ell)}.$$ Hence the condition of genaralised Hare-PSC is again satisfied. 
	\end{enumerate}
	This completes the proof. 
		\end{proof}
		}

%
%
%


\begin{corollary}
Under dichotomous preferences, PJR, weak generalised Hare-PSC, and generalised Hare-PSC are equivalent.
\end{corollary}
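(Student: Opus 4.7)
The plan is to establish a cyclic chain of implications
\[
\text{generalised Hare-PSC} \;\Longrightarrow\; \text{weak generalised Hare-PSC} \;\Longrightarrow\; \text{PJR} \;\Longrightarrow\; \text{generalised Hare-PSC},
\]
which yields equivalence of all three notions (under dichotomous preferences). Two of these three implications already appear as the preceding propositions in the section, so the only task is to verify the first implication and assemble the pieces.

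For the first implication, I would argue directly from the definitions: weak generalised $q$-PSC is simply generalised $q$-PSC restricted to those generalised solid coalitions $N'$ supporting a candidate subset $C'$ with the extra size constraint $|C'| \le \ell$. Hence any committee $W$ satisfying generalised $q$-PSC automatically satisfies the guarantee demanded by weak generalised $q$-PSC, since the latter is a strictly smaller set of requirements. Taking $q = n/k$ (the Hare quota) gives generalised Hare-PSC $\Rightarrow$ weak generalised Hare-PSC with no appeal to dichotomous preferences at all.

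For the remaining two implications, I would simply invoke \propref{prop: Hare-PSC implies PJR} (weak generalised Hare-PSC $\Rightarrow$ PJR under dichotomous preferences) and the subsequent proposition (PJR $\Rightarrow$ generalised Hare-PSC under dichotomous preferences). Chaining these three implications gives the desired cyclic equivalence. There is no real obstacle here: the corollary is essentially bookkeeping, and the only subtlety worth flagging is that the implication generalised $q$-PSC $\Rightarrow$ weak generalised $q$-PSC holds at the level of general weak preferences, so the dichotomous assumption is only needed to close the cycle through PJR.
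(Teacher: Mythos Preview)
Your proposal is correct and matches the paper's approach: the corollary is stated without proof precisely because it follows immediately from the two preceding propositions together with the trivial implication that generalised $q$-PSC implies weak generalised $q$-PSC (already noted in the text and in Figure~\ref{fig:rel}). Your cyclic chain is exactly the intended bookkeeping, and your observation that the first implication requires no dichotomy assumption is accurate.
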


Since it is known that testing PJR is coNP-complete~\citep{AzHu16a,AEH+18}, it follows that testing generalised PSC and generalised weak PSC is coNP-complete. 

\begin{corollary}
Testing generalised PSC and generalised weak PSC is coNP-complete even under dichotomous preferences.  
\end{corollary}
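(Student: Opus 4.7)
The plan is to derive the corollary directly from the preceding equivalence between PJR and generalised (weak) Hare-PSC under dichotomous preferences, combined with the known coNP-completeness of PJR-testing cited from \citet{AzHu16a,AEH+18}. Since coNP-completeness has two obligations (membership and hardness), I would address them separately.

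For coNP membership, I would argue that a ``no'' certificate for generalised (weak) $q$-PSC at a given committee $W$ consists of a positive integer $\ell$, a candidate subset $C'$ (with $|C'|\le \ell$ in the weak version), and a generalised solid coalition $N'$ supporting $C'$ with $|N'|\ge \ell q$, such that no $C''\subseteq W$ of size $\min\{\ell,|C'|\}$ satisfies the condition that for every $c''\in C''$ some $i\in N'$ has $c''\pref_i c^{(i,|C'|)}$. Verifying this witness reduces to checking that $N'$ is indeed a generalised solid coalition for $C'$ and that among the candidates in $W$ fewer than $\min\{\ell,|C'|\}$ meet the rank condition; both checks can be done in polynomial time in $n$, $m$, and $k$.

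For coNP-hardness, I would appeal to the preceding corollary: under dichotomous preferences, generalised Hare-PSC and generalised weak Hare-PSC each coincide with PJR. Hence the identity mapping that views a dichotomous approval profile as a weak preference profile with at most two equivalence classes (plus the same committee $W$ and the same $k$) is a polynomial-time many-one reduction from PJR-testing to the problem of testing generalised (weak) Hare-PSC. Because PJR-testing is known to be coNP-complete already on dichotomous inputs, the hardness transfers immediately, and since Hare-PSC is a special case ($q=n/k$) of generalised (weak) $q$-PSC within the admissible range $q\in(n/(k+1),n/k]$, the coNP-hardness of the more general problem follows.

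The only mildly subtle step is the hardness direction, where one must check that the identity reduction is legitimate: a dichotomous profile is genuinely a weak-order profile, the Hare quota $n/k$ depends only on $n$ and $k$ (which are preserved), and the candidate set and committee are unchanged, so the previous corollary applies to the very same instance. Everything else is bookkeeping, and the coNP upper bound is by inspection of the axiom's quantifier structure.
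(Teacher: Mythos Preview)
Your proposal is correct and follows essentially the same route as the paper: the paper derives the corollary immediately from the equivalence (under dichotomous preferences) of PJR with generalised (weak) Hare-PSC together with the cited coNP-completeness of PJR-testing. Your write-up is more explicit than the paper's one-line justification—in particular, you spell out coNP membership via a guess-and-check certificate rather than relying solely on the PJR equivalence—but the underlying argument is the same.
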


On the other hand, PSC and weak PSC can be tested efficiently (please see the appendix).

Figure~\ref{fig:rel} depicts the relations between the different PR axioms.

							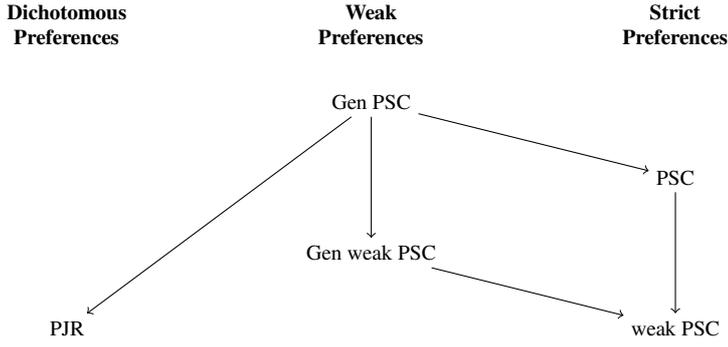
\begin{figure}[t]
								\begin{center}
						\begin{tikzpicture}
							\tikzstyle{pfeil}=[->,>=angle 60, shorten >=1pt,draw]
							\tikzstyle{onlytext}=[]

\node        (weak) at (0,10) {\begin{tabular}{c} \textbf{Weak}\\ \textbf{Preferences} \end{tabular}};

\node        (strict) at (4,10) {\begin{tabular}{c} \textbf{Strict}\\ \textbf{Preferences} \end{tabular}};

\node        (strict) at (-4,10) {\begin{tabular}{c} \textbf{Dichotomous}\\ \textbf{Preferences} \end{tabular}};


			\node        (G-PSC) at (0,9) {Gen PSC};
			
						\node        (G-w-PSC) at (0,7) {Gen weak PSC};


			\node        (PSC) at (4,8) {PSC};

			\node        (PJR) at (-4,6) {PJR};
			
						\node        (w-PSC) at (4,6) {weak PSC};			
						
	\draw[->] (G-PSC) -- (G-w-PSC);
	
	\draw[->] (G-PSC) -- (PJR);

	\draw[->] (PSC) -- (w-PSC);
	
	\draw[->] (G-PSC) -- (PSC);
	\draw[->] (G-w-PSC) -- (w-PSC);

						\end{tikzpicture}
						\end{center}
						\caption{Relations between properties. 
						 An arrow from (A) to (B) denotes that (A) implies (B). For any concept A, A with respect to Droop quota is stronger than A with respect to Hare quota. }
						 \label{fig:rel}
						\end{figure}

%
%
%
%
%
%
%
%
%
%
%
%
%

\section{The Case of STV}

In this section, we define the family of STV rules for instances where voters submit strict preferences. The family is formalised as Algorithm~\ref{algo:stv}. 
STV is a multi-round rule; in each round either a candidate is selected as a winner or one candidate is eliminated from the set of potential winners. Depending on the quota $q$ and the reweighting rule applied, one can obtain particular STV rules~\citep[see e.g., ][]{AlKa13a}. {\color{black} To distinguish between variants of STV, based on different quota values $q$, we denote the STV rule with quota $q$ by $q$-STV. When the quota $q$ is equal to the Hare or Droop quota we simply refer to the $q$-STV variant as Hare-STV or Droop-STV, respectively.} One of the most common rules is Hare-STV with discrete reweighting. This implies that a subset of voters of size {\color{black} $\ceil{n/k}$} 
 is removed from the profile once their most preferred candidate in the current profile has been selected. STV modifies the preference profile $\succ$ by deleting candidates. We will denote by $C(\succ)$ the current set of candidates in the profile $\succ$. When $k=1$, STV is referred to as \emph{Instant-Runoff voting (IRV)} or as the \emph{Alternative Vote (AV)}.

														\begin{algorithm}[h!]
															  \caption{STV family of Rules}
															  \label{algo:stv}
							\normalsize
															\begin{algorithmic}
																
																	\REQUIRE  $(N,C,\succ, k)$, quota $q\in (\frac{n}{k+1},\frac{n}{k}]$. \COMMENT{$\succ$ is a profile of strict preferences}

																\ENSURE $W\subseteq C$ such that $|W|=k$
															\end{algorithmic}
															\begin{algorithmic}[1]
																\normalsize
		

			 						\STATE $W\longleftarrow \emptyset$; 
										\STATE $w_i\longleftarrow 1$ for each $i\in N$							
				
										\STATE $j\longleftarrow 1 $														\WHILE{$|W|<k$}
								\IF{$|W|+ |C(\succ)|=k$}
								\RETURN $W\cup C(\succ)$
								\ENDIF						\IF{there is a candidate $c$ with plurality support (i.e., the total weight of voters who have $c$ as the most preferred candidate among candidates in $C(\succ)$)
								at least $q$ }
							\STATE \label{reweighting line STV} Let the set of voters supporting $c$ be denoted by $N'$. Modify the weights of voters in $N'$ so the total weight of voters in $N'$ decreases by $q$.  
						\STATE Remove $c$ from the profile $\succ$.						
							\STATE $W\longleftarrow W\cup\{c\}$
							\ELSE
							\STATE Remove a candidate with the lowest plurality support from the current preference profile $\succ$
							\ENDIF						
								\ENDWHILE
								\RETURN $W$
								\end{algorithmic}
								\end{algorithm}

%


								
								\begin{example}\label{Example: Droop-STV}[Illustration of Droop-STV]
							{	For this example we consider the Droop-STV rule with \emph{uniform fractional reweighting}. This reweighting method means that line~\ref{reweighting line STV} of Algorithm~\ref{algo:stv} is executed as follows: First calculate the total weight of voters in $N'$, i.e.,  $T=\sum_{i\in N'} w_i$, then the weight of each voter $i\in N'$ is updated from $w_i$ to $w_i\times \frac{T-q_D}{T}$ where $q_D$ is the Droop quota. }
								
								To illustrate this STV rule consider the following profile with 9 voters and suppose we wish to elect a committee of size $k=3$ 
								
												\begin{align*}
													1:&\quad c_1, c_2, c_3, e_1, e_2, e_3, e_4, d_1\\
													2:&\quad c_2, c_3, c_1, e_1, e_2, e_3, e_4, d_1\\
													3:&\quad c_3, c_1, d_1, c_2, e_1, e_2, e_3, e_4\\
													4:&\quad e_1, e_2, e_3, e_4, c_1, c_2, c_3, d_1\\
													5:&\quad e_1, e_2, e_3, e_4, c_1, c_2, c_3, d_1\\
													6:&\quad  e_1, e_2, e_3, e_4, c_1, c_2, c_3, d_1\\
													7:&\quad e_1, e_2, e_3, e_4, c_1, c_2, c_3, d_1\\
													8:&\quad  e_1, e_2, e_3, e_4, c_1, c_2, c_3, d_1\\
													9:&\quad  e_1, e_2, e_3, e_4, c_1, c_2, c_3, d_1
\end{align*}
					
													In the first round $e_1$ is selected and the total weight of the voters in set $\{4, 5, 6, 7, 8, 9\}$ goes down by the Droop quota $q_D$, i.e., slightly more than $2.25$. Candidate $e_1$ is then removed from the preference profile. In the second round,  $e_2$ is selected and the total weight of the voters in the set $\{4, 5, 6, 7, 8, 9\}$ is now $6-2q_D$, i.e., slightly less than 1.5. Candidate $e_2$ is then removed from the preference profile. After that since no candidate has plurality support, with respect to current weights, of at least the quota one candidate is deleted. Candidates $e_3$ and $e_4$ are removed in succession as they have plurality support no more than $1.5$ with respect to the current voting weights. Then candidate $c_1$ is elected since she has plurality support of $1+6-2q_D$, i.e., slightly less than 2.5, which exceeds the quota $q_D$.\end{example}

STV has been claimed to satisfy Proportionality for Solid Coalitions/Droop Proportionality Criterion~\citep{Dumm84a, Wood94a}. 
On the other hand, STV violates just about every natural monotonicity axiom that has been proposed in the literature.  

In STV, voters are viewed as having an initial weight of one. When a candidate supported by a voter is selected, the voter's weight may decrease. 
STV can use fractional reweighting or discrete reweighting.\footnote{Fractional reweighting in STV has been referred to as Gregory  or `senatorial'~\citep[see e.g., ][]{Jans16a, Tide95a}}  We will show that fractional reweighting is crucial for some semblance of PR. Incidentally, fractional reweighting is not necessarily introduced to achieve better PR but primarily to minimize the ``stochastic aspect'' of tie-breaking in STV~\citep[pp 32, ][]{Tide95a}. The following result shows that if STV resorts to discrete reweighting then it does not even satisfy weak PSC. Discrete reweighting refers to the modification of voter weights in Line 9 of Algorithm~\ref{algo:stv} such that the total weight of voters in $N'$ decreases by some integer greater or equal to $q$.


%

\begin{proposition}
Under strict preferences and for any $q, q'\in (\frac{n}{k+1}, \frac{n}{k}]$, $q$-STV with discrete reweighting does not satisfy weak $q'$-PSC. 
	\end{proposition}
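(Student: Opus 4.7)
The plan is to exhibit an instance together with an execution of $q$-STV with discrete reweighting whose output violates weak $q'$-PSC, leveraging the non-determinism in both the reweighting step (which permits any integer decrease $\ge q$) and the lowest-plurality elimination step (which permits arbitrary tie-breaking).

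First, I would construct the profile: take $m=k+1$ candidates $a, b, c_1, \dots, c_{k-1}$ and $n$ voters all with identical strict preference $a \spref b \spref c_1 \spref \cdots \spref c_{k-1}$, where we may assume $k \ge 2$ (the single-winner case $k=1$ is immediate since the unique solid coalition of size $> n/2$ always has its top candidate returned by any STV variant, as that candidate is never the lowest-plurality one during elimination). The entire electorate $N$ is then a solid coalition supporting $\{a, b\}$ of size $|N| = n \ge 2q'$, using $q' \le n/k \le n/2$. By weak $q'$-PSC applied with $\ell = 2$ and $C' = \{a,b\}$, every valid committee $W$ must contain both $a$ and $b$.

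Then I would trace the following execution. In iteration 1, $a$ has plurality $n \ge qk \ge q$ and is elected; at the reweighting step, I would decrease the total weight of $N' = N$ by the integer $n$ itself, which is permissible since $n \ge q$, and this removes every voter from the profile. In iteration 2, every remaining candidate has plurality $0$ and $|W| + |C(\succ)| = 1 + k \ne k$, so the algorithm eliminates a lowest-plurality candidate; I break the all-zero tie by eliminating $b$. In iteration 3, $|W| + |C(\succ)| = 1 + (k-1) = k$, and the algorithm returns $W \cup C(\succ) = \{a, c_1, \dots, c_{k-1}\}$, which omits $b$ and therefore violates weak $q'$-PSC. The main thing to verify is that each non-deterministic choice stays faithful to the algorithm's specification: the discrete reweighting step admits any integer decrease at least $q$ (so the choice of $n$ is valid), and the elimination step admits any candidate attaining the minimum plurality (so $b$ may be chosen from the all-zero tie).
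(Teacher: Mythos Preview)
Your construction is correct under the paper's definition of discrete reweighting (which permits any integer decrease $\ge q$), and it does establish the proposition. However, your route is genuinely different from the paper's and somewhat weaker in what it demonstrates.

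The paper builds an instance with $n=10$, $k=7$, $m=8$ in which a solid coalition of six voters backs $\{c_5,c_6,c_7,c_8\}$. Using the \emph{minimal} integer deduction $p=\lceil q\rceil=2$ at every election step, the cumulative over-deduction ($3\times 2=6$ against six units of weight) exhausts the coalition after only three of its four candidates are seated; $c_8$ then has strictly the lowest plurality (zero versus one for each of $c_1,\dots,c_4$) and is eliminated without any tie-breaking. This shows that even the most natural implementation of discrete reweighting---rounding the quota up to the nearest integer---already breaks weak $q'$-PSC, and it does so deterministically.

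Your argument instead exploits the non-determinism maximally: you deduct the entire weight $n$ in a single step and then invoke adversarial tie-breaking among all-zero pluralities to eliminate $b$. That is formally valid, but it only exhibits a pathological execution rather than the accumulation-of-rounding phenomenon the paper isolates. In particular, your example says nothing about the standard $\lceil q\rceil$-reweighting variant (where, for large $n$, your profile would happily elect $b$ next). One minor point: your parenthetical on $k=1$ reads oddly---you argue that STV \emph{satisfies} weak PSC there, which is the opposite of ``immediate'' for the proposition; what you mean is simply that no counterexample exists at $k=1$, so you must take $k\ge 2$. That is fine since a single counterexample suffices.
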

	\begin{proof}
	
		Let $N=\{1,2, \ldots, 10\}$, $C=\{c_1, \ldots, c_8\}$, $k=7$ and consider the following profile:
		\begin{align*}
			1:&\quad c_1, c_5, c_6, c_7, c_8, c_2, c_3, c_4\\
			2:&\quad c_2, c_5, c_6, c_7, c_8, c_1, c_3, c_4\\
			3:&\quad c_3, c_5, c_6, c_7, c_8, c_1, c_2, c_4\\
			4:&\quad c_4, c_5, c_6, c_7, c_8, c_1, c_2, c_3\\
			5:&\quad c_5, c_6, c_7, c_8, c_1, c_2, c_3, c_4\\
			6:&\quad c_5, c_6, c_7, c_8, c_1, c_2, c_3, c_4\\
			7:&\quad c_5, c_6, c_7, c_8, c_1, c_2, c_3, c_4\\
			8:&\quad c_5, c_6, c_7, c_8, c_1, c_2, c_3, c_4\\
			9:&\quad c_5, c_6, c_7, c_8, c_1, c_2, c_3, c_4\\
			10:&\quad c_5, c_6, c_7, c_8, c_1, c_2, c_3, c_4
			\end{align*}
			
For fixed $q\in  (\frac{n}{k+1}, \frac{n}{k}]$ we use $q$-STV with discrete reweighting to select the candidates. Under discrete reweighting, the total weights of voters are modified by some integer $p\ge 2$ (refer to Line 9 of Algorithm~\ref{algo:stv}). In this proof we focus on the case where $p=2$, a similar argument can be applied to prove the proposition for larger integer values.  Note that $2=\ceil{q}$ for any $q\in  (\frac{n}{k+1}, \frac{n}{k}]$.
			
			 Applying the $q$-STV rule, first $c_5, c_6$ and $c_7$ are selected. Each time we select these candidates, the total weight of voters in the set $N'=\{5, 6, 7, 8, 9, 10\}$ goes down by 2. Thus the remaining four candidates are to be selected from $c_1, c_2, c_3, c_4$ and $c_8$. At this stage candidate $c_8$ has the lowest plurality support (equal to zero) and is removed from all preference profiles and the list of potentially elected candidates, and hence $c_8\notin W$.
			
			Now for any fixed $q'\in(\frac{n}{k+1}, \frac{n}{k}]$, weak $q'$-PSC requires that at least four candidates from $\{c_5,c_6, c_7,c_8\}$ be selected, since $|N'|\ge 4\times q'$, but using discrete reweighting only three candidates are selected by $q$-STV. 
				\end{proof}
				
				The proof above has a similar argument as Example 1 in \citep{SEL17a} that concerns an approval voting setting.

		Next we show that STV satisfies PSC with fractional reweighting. The proof details are in the appendix.

	\begin{proposition}\label{prop:stv-psc-satisfy}
	For any $q\in (\frac{n}{k+1}, \frac{n}{k}]$, under strict preferences $q$-STV satisfies $q$-PSC.	\end{proposition}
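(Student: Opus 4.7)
The plan is a proof by contradiction. Assume $q$-STV returns a committee $W$ violating $q$-PSC: there is a solid coalition $N'$ for a candidate subset $C'$ with $|N'|\ge \ell q$ and $s := |W\cap C'| < \min\{\ell,|C'|\}$. Since $s<|C'|$, the set $C'\setminus W$ is nonempty; every candidate in $C'\setminus W$ must eventually leave $C(\succ)$ before termination (otherwise the early-return line would sweep it into $W$), so each such candidate is eliminated. Let $c^*$ be the \emph{last} candidate of $C'\setminus W$ to be eliminated and let $j^*$ denote the round of its elimination.

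The first step is a weight-tracking invariant for the coalition $N'$. Write $W_j(N')$ for the total weight of voters in $N'$ at the start of round $j$. Because $c^*$ is still present in $C(\succ)$ at every round up to and including $j^*$, the intersection $C'\cap C(\succ)$ is nonempty throughout; the solid-coalition property then forces every voter $i\in N'$ to have her current top in $C'\cap C(\succ)$ at every such round. In particular, $N'$-voters contribute nothing to the plurality support of any $c\notin C'$, so no $N'$-weight is lost when a non-$C'$ candidate is elected. When a $C'$-candidate is elected, fractional reweighting removes exactly $q$ from the total weight of the voters currently supporting it, hence at most $q$ from $N'$'s total weight. Writing $s'$ for the number of $C'$-candidates elected strictly before round $j^*$, this yields
\[ W_{j^*}(N') \;\ge\; |N'| - s' q \;\ge\; (\ell - s')q. \]

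The second step closes the argument by pigeonhole. Since $c^*$ is eliminated at round $j^*$, no candidate has plurality support of at least $q$ at that round. Moreover, $N'$'s entire weight at round $j^*$ is distributed as plurality support for candidates in $C'\cap C(\succ)$. The set $C'\cap C(\succ)$ consists of $c^*$ together with the $s-s'$ candidates from $W\cap C'$ that are yet to be elected, because the choice of $c^*$ as the last $C'\setminus W$ elimination means no other $C'\setminus W$ candidate remains in $C(\succ)$. Hence $|C'\cap C(\succ)| = (s - s') + 1$. By pigeonhole, some candidate in $C'\cap C(\succ)$ attracts plurality support at least
\[ \frac{(\ell-s')q}{s-s'+1} \;\ge\; q, \]
where the inequality follows from $s<\ell$, giving $\ell - s' \ge s - s' + 1$. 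This contradicts the absence of any candidate reaching quota at round $j^*$.

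The main obstacle lies in the weight-tracking invariant: one must exploit that fractional (not discrete) reweighting is used, verify that the per-election weight loss of $N'$ is bounded by $q$ even when some supporters of the elected candidate lie outside $N'$, and argue that $C'\cap C(\succ)\neq\emptyset$ is preserved up to round $j^*$ so that the solid-coalition property keeps $N'$-voters anchored in $C'$. The counting step is also delicate: the argument fails if $c^*$ is not chosen as the \emph{last} $C'\setminus W$ candidate to be eliminated, because then $|C'\cap C(\succ)|$ could exceed $s-s'+1$ and the pigeonhole quotient would fall below $q$.
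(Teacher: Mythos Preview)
Your argument is essentially correct and shares the same engine as the paper's proof: track the total weight retained by the coalition $N'$ (which can drop by at most $q$ only when a $C'$-candidate is elected, since up to round $j^*$ every $N'$-voter's current top lies in $C'$), and then apply pigeonhole at a round where too few $C'$-candidates remain to dilute that weight below $q$. The paper organises this a little differently---it orders the $C'$ candidates by the iteration in which they are processed, shows by a backward sweep that the last $j$ of them must all be \emph{elected}, and then argues the $(j{+}1)$-st from the end is also elected---whereas you jump directly to the single round in which the final $C'\setminus W$ candidate is eliminated and derive the contradiction there in one shot. Your packaging is tighter; the underlying invariant and the pigeonhole step are the same.

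There is one small gap worth patching. Your justification that every $c\in C'\setminus W$ is eventually eliminated (``otherwise the early-return line would sweep it into $W$'') is incomplete: the \texttt{while} loop may also terminate by reaching $|W|=k$ after an ordinary election, without the early-return condition $|W|+|C(\succ)|=k$ ever being triggered, and in that case candidates can remain in $C(\succ)$ untouched. To close this, observe that if some $c\in C'\setminus W$ survives in $C(\succ)$ until termination, then $N'$-voters have a $C'$-candidate available as their top throughout, so $N'$ retains weight at least $|N'|-sq\ge(\ell-s)q\ge q$; but termination via $|W|=k$ means exactly $k$ elections have occurred, each removing $q$ from the total, leaving overall weight $n-kq<n/(k+1)<q$, a contradiction. (This is exactly the content of the paper's ``First, we claim that $C'\subseteq\bigcup_t\{c_t\}$'' paragraph.) Once this is established, your choice of $c^*$ as the last eliminated $C'\setminus W$ candidate is well-defined and the rest of your argument goes through verbatim.
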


		Below we provide an example of STV violating CM for $k=1$. { The example assumes the Droop quota, i.e., Droop-STV, however, the same example violates CM for any $q$-STV rule with $q\in (\frac{n}{k+1}, \frac{n}{k}]$.}
		
		\begin{example}\label{Example: STV violation of CM}[Example showing that STV violates CM for $k=1$.]
									Consider the following instance of 100 voters with strict preferences.
					\begin{align*}
					\text{Total number of voters }& \quad \text{Corresponding preferences}\\
					 28 :&\quad c,b,a\\
					5:& \quad c,a,b\\
					30 :&\quad  a,b,c\\
					5:&\quad  a, c,b\\
					16:&\quad  b,c,a\\
					16:&\quad b,a,c.
					\end{align*}
					 We consider the single-winner election setting with the Droop quota; that is, $k=1$ and $q_D=50+\varepsilon$ for sufficiently small $\varepsilon>0$.
 
					 Under the Droop-STV rule the outcome is $W_{STV}=\{a\}$. To see this notice that the plurality support of the candidates $a,b,c$ are $35,32,33$, respectively. Since no candidate receives plurality support $\ge q_D$ we remove the candidate with lowest plurality support, i.e, candidate $b$, and the 32 voters previously supporting candidate $b$ now give their plurality support to their second preference. Thus, the updated plurality support of the two remaining candidates $a$ and $c$ are $35+16=51$ and $33+16=49$, and hence candidate $a$ is elected.

					 Now to show a violation of CM we consider an instance where two voters originally with preferences $c,a,b$ change their preferences to $a,c,b$  i.e. a reinforcement of the previously winning candidate $a$. The new profile is shown below
					 \begin{align*}
										\text{Total number of voters }& \quad \text{Corresponding preferences}\\
					28 \spref:&\quad c,b,a\\
					3\spref:& \quad c,a,b\\
					30\spref:&\quad  a,b,c\\
					7\spref:&\quad  a, c,b\\
					16\spref:&\quad  b,c,a\\
					16\spref:&\quad b,a,c.
					\end{align*}
					In this modified setting, the Droop-STV outcome is $W_{STV}'=\{b\}$ which is a violation of candidate monotonicity (CM). To see this notice that plurality support of the candidates $a,b,c$ are $37, 32, 31$, respectively. Since no candidate receives plurality support $\ge q_D$ we remove the candidate with lowest plurality support, i.e, candidate $c$, and the 31 voters previously supporting candidate $c$ now give their plurality support to their second preference. Thus, the updated plurality support of the two remaining candidates $a$ and $b$ are $37+3=40$ and $32+28=60$, and hence candidate $b$ is elected.
					
									\end{example}

\section{Expanding Approvals Rule (\newrule)}

We now present the Expanding Approvals Rule (\newrule). The rule utilises the idea of $j$-approval voting whereby every voter is asked to approve their $j$ most preferred candidates, for some positive integer $j$. At a high level, \newrule works as follows. 

\begin{quote}
An index $j$ is initialised to $1$. The voting weight of each voter is initially 1. We use a quota $q$ that is between $n/(k+1)$ and $n/k$.
While $k$ candidates have not been selected, we do the following. 
We perform $j$-approval voting with respect to the voters' current voting weights. If there exists a candidate $c$ with approval support at least a quota $q$, we select such a candidate. 
If there exists no such candidate, we increment $j$ by one and repeat until $k$ candidates have been selected.  
\end{quote}

							\begin{algorithm}[h!]
								  \caption{Expanding Approvals Rule (\newrule)}
								  \label{algo:newrule}
\normalsize
								\begin{algorithmic}
									\REQUIRE  $(N,C,\pref, k)$ parametrised by quota $q\in (\frac{n}{k+1},\frac{n}{k}]$. \COMMENT{$\pref$ can contain weak orders; if a voter $i$ expresses her preferences over a subset $C'\subset C$, then $C\setminus C'$ is considered the last equivalence class of the voter.}
									\ENSURE $W\subseteq C$ such that $|W|=k$
								\end{algorithmic}
								\begin{algorithmic}[1]
									\normalsize
		\STATE \label{step:priority} Use some default priority ordering $L$ over $C$.
			\STATE $w_i\longleftarrow 1$ for each $i\in N$							
				
			\STATE $j\longleftarrow 1 $														\WHILE{$|W|<k$}

		\WHILE{there does not exist a candidate in $C\setminus W$ with support at least $q$ in a $j$-approval vote}
		\STATE $j\longleftarrow j+1$				
									\ENDWHILE
									\STATE \label{step-j-approval} Among the candidates with support at least $q$ in a $j$-approval election, select the candidate $c$ from $C\backslash W$ that has highest ranking wrt $L$. \COMMENT{Voters are asked to approve their $j$ most preferred candidates and any candidates that are at least as preferred as the $j$--th most preferred candidate.}
						
									\STATE \label{step-reweighting} 
									
			Let the set of voters supporting $c$ in the $j$-approval election be denoted by $N'$. Modify the weights of voters in $N'$ so the total weight of voters in $N'$ decreases by exactly $q$. 						
									
		%
	\ENDWHILE
								
									\RETURN $W$
								\end{algorithmic}
							\end{algorithm}

The rule is formally specified as Algorithm~\ref{algo:newrule}. 
It is well-defined for weak preferences. 
EAR is based on a combination of several natural ideas that have been used in the design of voting rules.

\begin{enumerate}
	\item Candidates are selected in a sequential manner. 
	\item A candidate needs to have at least the Droop quota of `support' to be selected.  
	\item The voting weight of a voter is reduced if some of her voting weight has already been used to select some candidate. The way voting weight is reduced is fractional.
\item 

We use $j$-approval voting for varying $j$. When considering weak orders, we adapt $j$-approval voting so that in $j$-approval voting, a voter not only approves her $j$ most preferred candidates but also any candidate that is at least as preferred as the $j$-th most preferred candidate.  \harisnew{One way to view $j$-approval for weak orders is as follows. 
(1) break all ties temporarily to get an artificial linear order. (2) Identify the $j$-th candidate $d$ in the artificial linear order. (3) Approve all candidates that are at least as weakly preferred as $d$. }
\end{enumerate}


For EAR, the default value of $q$ that we propose is 
\[\bar{q}=\frac{n}{k+1} + \frac{1}{m+1} \left(\floor{\frac{n}{k+1}}+1-\frac{n}{k+1}\right).\]
The reason for choosing this quota is that
		$\bar{q}$ can be viewed as $\bar{q}=\frac{n}{k+1}+\varepsilon$ where $\varepsilon$ is small enough so that
		that for any $\ell\leq k$, $\ell\cdot \bar{q} <\ell\frac{n}{k+1}+1$. {\color{black} In particular, this implies that if there exists a solid coalition $N'$ of size $|N'|\ge \ell q_D$ (where $q_D$ is the Droop quota) then $|N'|\ge \ell \bar{q}$.} On the other hand, $\varepsilon$ is large enough so that the algorithm is polynomial in the input size.
	
	We also propose a default priority ordering that is with respect to rank maximality under $\pref$. This way of tie breaking is one but not the only way to ensure that  \newrule satisfies RRCM. For any candidate $a$, its \emph{corresponding rank vector} is $r(a)=(r_1(a), \ldots, r_m(a))$ where $r_j(a)$ is the number of voters who have $a$ in her $j$-th most preferred equivalence class.
We compare rank vectors lexicographically. One rank vector $r=(r_1,\ldots, r_m)$ is \emph{better} than $r'=(r_1',\ldots, r_m')$ if for the smallest $i$ such that $r_i\neq r_i'$, it must hold that $r_i>r_i'$.

Finally we propose the following natural way of implementing the reweighting in Step~\ref{step-reweighting}.
If the total support for $c$ in the $j$-approval election is $T$, then for each $i\in N$ who supported $c$, we reweigh it as follows
 \[w_i\longleftarrow w_i\times \frac{T-q}{T}.\]
This ensures that exactly $q$ weight is reduced.

	%

In the following example, we demonstrate how \newrule works. 



\begin{example}\label{Example: Illustration new rule}[Illustration of \newrule]
Consider the profile with 9 voters and where $k=3$. Note that the default quota is $\bar{q}\approx 2.33$.
				\begin{align*}
					1:&\quad c_1, c_2, c_3, e_1, e_2, e_3, e_4, d_1\\
					2:&\quad c_2, c_3, c_1, e_1, e_2, e_3, e_4, d_1\\
					3:&\quad c_3, c_1, d_1, c_2, e_1, e_2, e_3, e_4\\
					4:&\quad e_1, e_2, e_3, e_4, c_1, c_2, c_3, d_1\\
					5:&\quad e_1, e_2, e_3, e_4, c_1, c_2, c_3, d_1\\
					6:&\quad  e_1, e_2, e_3, e_4, c_1, c_2, c_3, d_1\\
					7:&\quad e_1, e_2, e_3, e_4, c_1, c_2, c_3, d_1\\
					8:&\quad  e_1, e_2, e_3, e_4, c_1, c_2, c_3, d_1\\
					9:&\quad  e_1, e_2, e_3, e_4, c_1, c_2, c_3, d_1
					\end{align*}
					
					In the first round $e_1$ is selected and the total weight of the voters in set $\{4, 5, 6, 7, 8, 9\}$ goes down by $\bar{q}$, i.e., $6-\bar{q}\approx 3.67$. In the second round, since no other candidate has sufficient weight when we run the 1-approval election, we consider 2-approval election. Under 2-approval, candidate $e_2$ receives support $6-\bar{q}\approx 3.67$ which exceeds $\bar{q}$ and hence is elected.  When $e_2$ is selected, the total weight of the voters in set $\{4, 5, 6, 7, 8, 9\}$  goes down again by $\bar{q}$. At this point the total weight of voters in set $\{4, 5, 6, 7, 8, 9\}$ is $6-2\bar{q}\approx 1.34$ and no unselected candidate has approval support more than $\bar{q}$, under 2-approval. So \newrule considers 3-approval whereby, candidates $c_1$ and $c_3$ both get support 3. Hence \newrule selects $c_1$, due to the default priority ordering $L$ (rank maximality ordering), and the winning committee is $\{e_1,e_2, c_1\}.$
					
					\end{example}
		The example above produced an \newrule outcome which coincided with the Droop-STV outcome (recall Example~\ref{Example: Droop-STV}). Next, we present an example showing that STV and EAR are different even for $k=1$. 
		

									\begin{example}[Example showing that STV and EAR are different even for $k=1$.]
					Recall Example~\ref{Example: STV violation of CM} which showed that the following instance of 100 voters with strict preferences
					\begin{align*}
					\text{Total number of voters }& \quad \text{Corresponding preferences}\\
28 :&\quad c,b,a\\
					3:& \quad c,a,b\\
					30:&\quad  a,b,c\\
					7:&\quad  a, c,b\\
					16:&\quad  b,c,a\\
					16:&\quad b,a,c,
					\end{align*}
					 leads to the single-winner Droop-STV outcome $W=\{b\}$. { More generally for any quota $q\in (\frac{n}{k+1}, \frac{n}{k}]$ the $q$-STV outcome is also $\{b\}$.} We now show that in this case the \newrule outcome is $W'=\{a\}$ and hence the STV outcome and \newrule outcomes do not coincide even for $k=1$.
					 
					 To see that the \newrule outcome is $W'=\{a\}$ we proceed as follows: First, we consider the 1-approval election which gives candidates $a,b,c$ approval support $37,32,31$, respectively. Since no candidate attains approval support beyond the default quota $\bar{q}=50\frac{1}{3}$ we move to the 2-approval election. In the 2-approval election candidates $a,b,c$ attain approval support $56,90,54$, respectively. Since all candidates attain approval support beyond $\bar{q}$ we apply our default priority ordering $L$ (rank-maximality) which leads to candidate $a$ being elected.

 
 
									\end{example}

					\begin{remark}\label{Remark: tweaks to EAR}
						In case voters do not specify certain candidates in their list and do not wish that their vote weight to be used to approve such candidates, EAR can be suitably tweaked so as to allow this requirement. In this case, candidates are selected as long as a selected candidate can get approval weight $q$. The required number of remaining candidates can be selected according to some criterion. 
Another way EAR can be varied is that instead of using $L$ as the priority ordering, the candidate with the highest weighted support that is at least $q$ is selected.
						\end{remark}

We point out the rule's outcome can be computed efficiently.

		\begin{proposition}
			\newrule runs in polynomial time $O(n+m)^{2}$.
			\end{proposition}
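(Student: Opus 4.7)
The plan is to bound the work done by Algorithm~\ref{algo:newrule} by decomposing it into (i) the number of iterations of the outer while loop, (ii) the total number of increments of the counter $j$, and (iii) the per-iteration cost of computing $j$-approval supports and of reweighting. The key structural observations are that the outer while loop executes at most $k \le m$ times (one new candidate is added to $W$ in each iteration) and that $j$ is never decreased by the algorithm and never exceeds $m$, so there are at most $m-1$ increments of $j$ during the entire run.

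First I would handle the preprocessing in Step~\ref{step:priority}: for each candidate $a$ we form the rank vector $r(a)=(r_1(a),\ldots,r_m(a))$ in time $O(nm)$, then sort the $m$ candidates lexicographically to obtain $L$ in time $O(m^2\log m)$. Next I would describe an incremental maintenance of the approval support. For each voter $i$ we keep, across the run, a pointer into her preference list marking her currently $j$--approved candidates and the running weighted support $s(c)$ of each candidate $c \in C\setminus W$. When $j$ is incremented, each voter's approval set gains exactly the candidates in her $(j)$--th equivalence class; summed over all increments, this touches each (voter, candidate) pair at most once, costing $O(nm)$ in total. When a candidate $c$ is selected in Step~\ref{step-j-approval}, the reweighting in Step~\ref{step-reweighting} updates the weight of at most $n$ voters and propagates a corresponding change $w_i\bigl(\frac{T-q}{T}-1\bigr)$ to the stored support $s(c')$ of every candidate $c'$ approved by voter $i$; this costs $O(nm)$ per selection and $O(knm)=O(m^2 n)$ across all selections. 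Finally, scanning $C\setminus W$ to find a candidate whose stored support meets $q$, and then breaking ties using $L$, costs $O(m)$ per check.

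Summing these contributions, the total running time is $O(nm) + O(m^2\log m) + O(knm)$, which is polynomial in $n+m$ and in particular is bounded by $O((n+m)^2)$ up to the stated asymptotic factor. I would then conclude by remarking that arithmetic operations on the rational weights remain of polynomial bit-size: the choice of $\bar q$ has denominator polynomial in $n,m,k$, and the multiplicative factors $(T-q)/T$ accumulate at most $k$ times, so each $w_i$ is represented by a rational of polynomial bit length, preserving polynomial running time in the standard bit model.

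The main obstacle is ensuring the incremental bookkeeping is set up so that both events (``increment $j$'' and ``select a candidate and reweigh'') do not, individually or cumulatively, blow the bound. The subtle point is that incrementing $j$ need not add only one candidate per voter when preferences have ties; this is handled by the amortized argument that each voter--candidate pair appears in exactly one $j$--increment over the run. Once this is observed, the polynomial bound follows directly from the counting above.
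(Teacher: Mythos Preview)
Your amortized bookkeeping is more careful than the paper's sketch, and it does establish that \newrule runs in polynomial time: the outer loop executes $k\le m$ times, $j$ is incremented at most $m-1$ times overall, and each of these events costs at most $O(nm)$. The paper, by contrast, simply asserts that the rank vectors are built in $O(n+m)$, the priority order $L$ in $O((n+m)^{2})$, and that ``each round takes linear time,'' without the explicit accounting you give for propagating weight changes after a selection. In that respect your argument is the more honest one.

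There is, however, a real gap at the final step. The dominant term you derive, $O(knm)\le O(nm^{2})$, is \emph{not} bounded by $O((n+m)^{2})$: take $n=m$, so that $knm$ can be of order $n^{3}$ while $(n+m)^{2}=4n^{2}$. The phrase ``bounded by $O((n+m)^{2})$ up to the stated asymptotic factor'' does not repair this; what you have actually proved is a cubic bound, not the quadratic one in the statement. (The $O(m^{2}\log m)$ sorting step has the same issue when $n$ is small, although radix sort on the length-$m$ rank vectors brings that term down to $O(m^{2})$.) To match the stated $O((n+m)^{2})$ you would need to argue that each of the at most $k+m\le 2m$ events---a selection-plus-reweighting or a $j$-increment---can be processed in $O(n+m)$ rather than $O(nm)$ time; your current per-selection cost of $O(nm)$ for propagating $w_i\bigl(\tfrac{T-q}{T}-1\bigr)$ to every approved candidate is precisely what blows the bound. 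Either sharpen that step or state explicitly the (still polynomial) bound your analysis yields.
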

			
			
			The rank maximal vectors can be computed in $O(n+m)$ and the ordering based on rank maximality can be computed in time $O(n+m)^{2}$. In each round, the smallest $j$ is found for which there are some candidates in $C\setminus W$ that have an approval score of at least $q$. The candidate $c$ which is rank-maximal is identified. All voters who approved of $c$ have their weight modified accordingly which takes linear time. Hence the whole algorithm takes time at most $O(n+m)^{2}$.

	A possible criticism of \newrule is that the choice of quota as well as reweighting makes it complicated enough to not be usable by hand or to be easily understood by the general public. However we have shown that without resorting to fractional reweighting, STV already fails weak PSC. 

				Since \newrule is designed for proportional representation which is only meaningful for large enough $k$, \newrule may not be the ideal rule for $k=1$. Having said that, we mention the following connection with a single-winner rule from the literature. 
				
				
				\begin{remark}
					For $k=1$ and under linear orders with every candidate in the list, \newrule is equivalent to the Bucklin voting rule~\citep{BrSa09a}. For $k=1$ and under linear orders for all but a subset of equally least preferred candidates applying the tweak in Remark~\ref{Remark: tweaks to EAR} leads to the \newrule being equivalent to the Fallback voting rule~\citep{BrSa09a}. 
					\end{remark}
%

%

Under dichotomous preferences and using Hare quota, \newrule bears similarity to Phragm{\'e}n's first method (also called Enstr{\"{o}}m's method) described by \citet{Jans16a} (page 59).
However the latter method when extended to strict preferences does not satisfy Hare-PSC. Although \newrule has connections with previous rules, extending them to the case of multiple-winners and to handle dichotomous, strict and weak preferences simultaneously and satisfy desirable PR properties requires careful thought. 

%


We observe some simple properties of the rule. It is anonymous (the names of the voters do not matter). It is also neutral as long as lexicographic tie-breaking is not required to be used. 
Under linear orders and when using \newrule with the default quota, if more than half the voters most prefer a candidate, then that candidate is selected. This is known as the majority principle.

{ \newrule is defined with default quota $q$. However it is possible to consider variants of \newrule with other quota values such as the Hare quota. We refer to the variant of \newrule with the Hare quota as Hare-\newrule. The proposition below shows that the choice of this quota can lead to different outcomes. In particular, we show that \newrule (defined with the default quota) can lead to a different outcome than that attained under Hare-\newrule.}


		\begin{proposition}\label{prop:hare-droop}
		Under dichotomous preferences, Hare-EAR and EAR are not equivalent. 
		\end{proposition}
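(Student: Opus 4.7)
The plan is to exhibit a specific dichotomous-preference instance on which EAR (with the default quota $\bar q$) and Hare-EAR return different committees. Take $n=5$, $k=2$, candidate set $C=\{a,b,c\}$, and approval sets $A_1=A_2=\{a\}$, $A_3=\{a,b\}$, $A_4=A_5=\{c\}$. Then $q_H=n/k=5/2$ while
$\bar q=\tfrac{5}{3}+\tfrac{1}{m+1}\bigl(\lfloor 5/3\rfloor+1-5/3\bigr)=\tfrac{5}{3}+\tfrac{1}{12}=\tfrac{7}{4}$.
The driving feature of the example is the gap $\bar q<2<q_H$: a candidate supported by exactly two voters clears $\bar q$ but not $q_H$.

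Next I trace Hare-EAR. In the first $1$-approval round the supports are $(a,b,c)=(3,1,2)$, so only $a$ meets $q_H$; electing $a$ rescales the weights of voters $1,2,3$ to $(3-5/2)/3=1/6$. The next $1$-approval round fails to reach $q_H$, so the rule advances to $2$-approval. Here voters $1,2,4,5$ have $|A_i|=1<j$, so they approve all of $C$, while voter $3$ (with $|A_3|=2$) still approves only $\{a,b\}$. The weighted supports become $b\colon 3\cdot(1/6)+2=5/2$ and $c\colon 2\cdot(1/6)+2=7/3<5/2$, so $b$ is chosen. Hare-EAR returns $\{a,b\}$.

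Then I trace EAR with $\bar q=7/4$. In the first $1$-approval round, both $a$ (support $3$) and $c$ (support $2$) meet $\bar q$. The rank-maximal priority prefers $a$ because $r(a)=(3,2)$ lexicographically dominates $r(c)=(2,3)$; electing $a$ rescales the weights of voters $1,2,3$ to $(3-7/4)/3=5/12$. In the next round $1$-approval still suffices: $c$ has support $2\geq \bar q$ while $b$ has support only $5/12$, so $c$ is the unique candidate meeting the quota and is elected. EAR returns $\{a,c\}\neq\{a,b\}$, establishing the claim.

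The only subtle point in writing this up is the correct interpretation of $j$-approval for a dichotomous voter with $|A_i|<j$: the $j$-th most preferred candidate (after lexicographic tie-break) lies in the second equivalence class $E_i^2$, and since every other candidate in $E_i^2$ is weakly preferred to it, the voter approves all of $C$. Once this is pinned down, both traces reduce to routine reweighting arithmetic, and no tie-breaking below the level of rank-maximality is needed, since at each selection step either a unique non-selected candidate meets the quota or the rank-maximal vectors already distinguish the candidates.
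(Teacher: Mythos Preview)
Your proof is correct and follows the same strategy as the paper: exhibit an explicit dichotomous instance on which the two quota variants of EAR return different committees. The paper's counterexample is larger ($n=100$, $k=5$, seven candidates) and hinges on no candidate reaching the Hare quota at $j=1$ so that the priority order $L$ decides everything at $j=2$; your instance ($n=5$, $k=2$, three candidates) is more economical and isolates the divergence cleanly via the gap $\bar q<2<q_H$, so that after both rules elect $a$, the residual support of $c$ clears $\bar q$ but not $q_H$. One cosmetic point: the rank vectors should formally have $m=3$ components, i.e.\ $r(a)=(3,2,0)$ and $r(c)=(2,3,0)$, though this does not affect the lexicographic comparison.
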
				 

		\begin{proof}
		Let $|N|=100$ and $k=5$. Denote the Hare quota by $q_H=20$ and note that the default quota is $\bar{q}\approx 16.71$. Let the preferences of the voters be (i.e. expressing the top equivalence class of most preferred candidates):
		\begin{align*}
\text{Total number of voters }& \quad \text{Corresponding preferences}\\
		17: &\qquad \{a\}\\
		  17: &\qquad \{b,f\}\\
		 17: &\qquad \{c\}\\
		 17: &\qquad \{d\}\\
		 17: &\qquad \{e\}\\
		  1:&\qquad \{f\}\\
		  14: &\qquad \{g\}.
		\end{align*}

		First note that the rank-maximal ordering, $L$ (with the lexicographic ordering applied for equal rank maximal ordering) is 
		$$f\triangleright a\triangleright b\triangleright c\triangleright d\triangleright e\triangleright g.$$
		This ordering is independent of voter weights, is calculated at the start of the algorithm, and is never updated throughout the algorithm. 

		Now under Hare-EAR, in the 1-approval election no candidate has support exceeding $q_H$. Thus, we move to a 2-approval election whereby all voters support all candidates and all candidates have support $100$. Thus, we select the candidate with highest rank-maximal priority which is candidate $f$ and reweight its supporters (i.e. reduce all voter weights to $8/10$ since all voters support all candidates). Repeating this process leads to the election of the first five candidates according to the priority ranking $L$ i.e.
		$$W=\{f,a,b,c,d\}.$$ 

		Now under EAR, in the 1-approval election there are 6 candidates $\{a,b,c,d,e, f\}$ with support $\bar{q}$. Thus, we select the candidate with highest L priority, i.e., candidate $f$, and then reweight each of its 18 supporters' weights to $\frac{18-\bar{q}}{18}\approx 0.07$. Now there still remains 4 candidates $\{a,c,d,e\}$ with support of $17>\bar{q}$, since all voters supporting such a candidate are disjoint we have that the winning committee is 
		$$W'=\{f,a,c,d,e\}.$$

		Note that $W\neq W'$.
		\end{proof}

\section{Proportional Representation and Candidate Monotonicity under \newrule}

\subsection{Proportional Representation  under \newrule}

We show that \newrule satisfies the central PR axioms for general weak order preference profiles.

							\begin{proposition}
								Under weak orders, \newrule satisfies generalised Droop-PSC.
								\end{proposition}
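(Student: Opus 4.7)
\medskip
\noindent
\textbf{Proof proposal.} Call a candidate $c$ \emph{good for $N'$} if $c \pref_i c^{(i,|C'|)}$ for some $i \in N'$. I would first observe that every $c' \in C'$ satisfies $c' \pref_i c^{(i,|C'|)}$ for \emph{every} $i \in N'$: because $N'$ is a generalised solid coalition for $C'$, any element of $C \setminus C'$ lies in an equivalence class of $\pref_i$ weakly below every element of $C'$, so $c^{(i,|C'|)}$ sits in the $|C'|$-th class and each $c' \in C'$ is in a class at or above it. Hence every $c' \in C'$ is good, and to prove generalised Droop-PSC it suffices to show that EAR places at least $t := \min\{\ell,|C'|\}$ good candidates into $W$.

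The engine is a monovariant maintained throughout the ``Phase 1'' portion of the execution, meaning the selections made while $j \le |C'|$. The key lemma is that if any $i \in N'$ approves $c$ in $j$-approval with $j \le |C'|$, then $c \pref_i c^{(i,j)} \pref_i c^{(i,|C'|)}$, so $c$ is good; equivalently, the selection of a non-good candidate in Phase 1 deducts zero weight from $N'$. Letting $p$ denote the running count of good candidates in $W$ and $W_{N'}$ the current total weight of voters in $N'$, the quantity $W_{N'}+p\bar q$ is therefore non-decreasing throughout Phase 1 (a good selection drains at most $\bar q$ from $W_{N'}$ while adding exactly $\bar q$ to $p\bar q$, and bad selections change neither term). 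Its initial value is $|N'| \ge \ell q_D \ge \ell \bar q$, the second inequality being the defining property of the default quota $\bar q$.

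I would then split into two exhaustive cases according to how Phase 1 ends. In \emph{Case A}, the algorithm increments $j$ past $|C'|$ at some moment. At that moment no unselected candidate has $|C'|$-approval support at least $\bar q$; since every $i \in N'$ approves every $c' \in C'$ in $|C'|$-approval, any unselected $c' \in C'$ would receive $N'$-support equal to $W_{N'}$. So either all of $C'$ has already been elected, giving $p \ge |C'| \ge t$, or $W_{N'} < \bar q$, in which case the monovariant yields $p\bar q > (\ell-1)\bar q$, whence $p \ge \ell \ge t$ by integrality of $p$ and $\ell$.

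In \emph{Case B}, the algorithm terminates with $|W|=k$ without ever leaving Phase 1. Each selection removes exactly $\bar q$ of total weight and all voters' weights remain non-negative, so $W_{N'} \le n - k\bar q$. Combined with the monovariant's $W_{N'} \ge (\ell-p)\bar q$, this gives $(\ell - p + k)\bar q \le n$. The crucial ingredient now is that $\bar q$ is chosen strictly greater than $n/(k+1)$, which forces $\ell - p + k < k + 1$; since $\ell - p$ is an integer, $p \ge \ell \ge t$. I expect this last case to be the main obstacle: the monovariant alone only bounds $p$ in terms of the residual $W_{N'}$, and one needs the global weight-budget inequality $W_{N'} \le n - k\bar q$ together with the strict bound $\bar q > n/(k+1)$ to close the argument.
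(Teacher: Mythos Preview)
Your proof is correct and follows essentially the same approach as the paper's: both arguments hinge on the observation that while $j$ is small every approval cast by a member of $N'$ goes to a ``good'' candidate (the paper's $\hat{C}$ is your set of good candidates), and both finish with the same two-case weight-accounting --- either the approval threshold reaches the point where all of $C'$ is approved by $N'$ (your Case~A, the paper's ``$j^*$ is reached'' paragraph), or the algorithm fills $W$ first and the global budget $n-k\bar q$ combined with $\bar q>n/(k+1)$ yields the contradiction (your Case~B, the paper's displayed inequalities (\ref{sumofweights2})--(\ref{sumofweights3})). Your packaging via the monovariant $W_{N'}+p\bar q$ and a direct rather than contradiction argument is a bit cleaner, but the underlying mechanics are identical.
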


		\begin{proof}
		Let $W$ be an outcome of the \newrule and suppose for the purpose of a contradiction that generalised Droop-PSC is not satisfied. That is, there exists a positive integer $\ell$ and a generalised solid coalition $N'$ such that $|N'|\ge \ell q_D$ (where $q_D$ is the Droop quota) supporting a candidate subset $C'$ and for every set $C''\subseteq W$ with $|C''|\ge \min\{\ell, |C'|\}$ there exists $c''\in C''$ such that 
		\begin{align}\label{strictcontracondition}
		\forall i\in N' \,\qquad c^{(i, |C'|)}\spref_i c''.
		\end{align}
		Without loss of generality we may assume that the solid coalition $N'$ is chosen such that $\ell \le |C'|$ and so $\min\{\ell, |C'|\}=\ell$.

		Let $j^*$ be the smallest integer such that in the $j^*$-approval election each voter in $N'$ supports all candidates in $C'$. We claim that the $j^*$-approval election must be reached. Suppose not, then it must be that $|W|=k$ at some earlier $j$-approval election where $j<j^*$. But if $|W|=k$ this implies that after reweighting
			\begin{align}\label{sumofweights2}
			\sum_{i\in N} w_i&= n-k\bar{q}=\frac{n}{k+1}-k\bar{\varepsilon},
			\end{align}
			where $\bar{\varepsilon}=\frac{1}{m+1}\Big(\floor{\frac{n}{k+1}}+1-\frac{n}{k_1}\Big)$. However, in every $j'$-approval election for $j'\le j$ each voter $i\in N'$ only supports candidates weakly preferred to $c^{(i, |C'|)}$. Let $\hat{C}_i$  be the set of candidates which voter $i$ finds weakly preferable to $c^{(i, |C'|)}$ and define $\hat{C}=\cup_{i\in N'} \hat{C}_i$. The total weight of voters in $N'$ at the termination of the algorithm (i.e. at the end of the $j$-approval election) is reduced by at most $|C''| \bar{q} $ where $C''=W\cap \hat{C}$. But by assumption since $C'' \subseteq W$ and every candidate $c''\in C''$ is weakly preferred to $c^{(i, |C'|)}$ for some voter $i\in N'$ it must be that $|C''|<\min\{\ell, |C'|\}=\ell$. It follows that
			\begin{align}\label{sumofweights3}
			\sum_{i\in N'} w_i\ge \ell q_D-(\ell-1)\bar{q}= \frac{n}{k+1}-(\ell-1)\bar{\varepsilon},
			\end{align}
			which contradicts (\ref{sumofweights2}) since $N'\subseteq N$. Thus, we conclude that the $j^*$-approval election is indeed reached.	

		Now at the $j^*$-approval election each voter $i\in N'$ supports only the candidates in the set $\hat{C}_i$, excluding those already elected in an earlier approval election. Recall that $\hat{C}=\cup_{i\in N'} \hat{C}_i$ and let $m^*$ be the number of candidates elected from $\hat{C}$ in earlier approval elections. Since $|N'|\ge \ell q_D$ implies that $|N'|\ge \ell \bar{q}$, it follows that the total weight of voters in $N'$ when the $j^*$-approval election is reached is at least $(\ell-m^*)\bar{q}$. Since $C'\subseteq \hat{C}_i$ for all $i\in N'$, every unelected candidate in $C'$ attains support at least $(\ell-m^*)\bar{q}$, note that there are at least $|C'|-m^*\ge \ell-m^*$ of these. In addition, each voter $i\in N'$ also supports the unelected candidates in $\hat{C}_i\backslash C'$. Thus, the \newrule algorithm can only terminate if weight of voters in $N'$ is reduced below $\bar{q}$ which can only occur if at least $(\ell-m^*)$ candidates from $\hat{C}$ are elected. It then follows that 
		$$|W\cap \hat{C}|=m^*+(\ell-m^*)=\ell.$$
	By again defining $C''=W\cap \hat{C}$ we attain a contradiction since $C''\subseteq W$ and $|C''|\ge \ell$ but (\ref{strictcontracondition}) does not hold since $C''\subseteq \hat{C}$.
%
%
%
%
			\end{proof}


		\begin{remark}
			Note that \newrule satisfying PSC or generalised PSC does not depend on  what priority tie-breaking is used (Step~\ref{step:priority}) or how the fractional reweighting is applied (Step~\ref{step-reweighting}). 
			\end{remark}
%
%

		
										Recalling Lemma~\ref{Lemma: q and q' generalised PSC} we have the following corollary.
	
									\begin{corollary}\label{Corollary: EAR satisfies gen Hare PSC}
										Under weak orders, \newrule satisfies generalised Hare-PSC.
										\end{corollary}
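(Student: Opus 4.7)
The plan is to derive this corollary as an immediate consequence of two previously established results: the proposition showing that \newrule satisfies generalised Droop-PSC (which precedes the corollary), and Lemma~\ref{Lemma: q and q' generalised PSC}, which states that generalised $q$-PSC implies generalised $q'$-PSC whenever $q < q'$.

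First, I would fix an arbitrary preference profile and let $W$ denote the committee returned by \newrule. By the preceding proposition, $W$ satisfies generalised $q_D$-PSC, where $q_D$ denotes the Droop quota (or more precisely the value $\bar{q}$ used by the algorithm, which can be written as $n/(k+1) + \varepsilon$ for suitable $\varepsilon > 0$). I would then note that the Droop quota $q_D$ is strictly smaller than the Hare quota $q_H = n/k$: indeed, $n/(k+1) < n/k$ for any positive integers $n, k$, and the same inequality persists after adding the small $\varepsilon$ chosen by the algorithm, which is arranged to keep $\bar{q}$ well below $n/k$.

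Given that $q_D < q_H$, a direct appeal to Lemma~\ref{Lemma: q and q' generalised PSC} (with $q = q_D$ and $q' = q_H$) yields that $W$ also satisfies generalised $q_H$-PSC, i.e., generalised Hare-PSC. Since the profile was arbitrary, this holds for every outcome of \newrule, completing the proof.

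The main obstacle, if any, is merely verifying the ordering of the two quota values in the algorithm's specific parametrisation, i.e., checking that the particular $\bar{q}$ used by \newrule indeed lies strictly below $n/k$; this follows from the defining expression $\bar{q} = n/(k+1) + \tfrac{1}{m+1}\bigl(\lfloor n/(k+1)\rfloor + 1 - n/(k+1)\bigr)$, where the second term is at most $1/(m+1) < 1$, so $\bar{q} < n/(k+1) + 1 \le n/k$ in all nontrivial cases. Beyond this verification, the corollary is essentially a one-line consequence of the preceding two results.
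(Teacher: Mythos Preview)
Your proposal is correct and follows exactly the paper's own reasoning: the paper derives this corollary in one line by ``Recalling Lemma~\ref{Lemma: q and q' generalised PSC},'' i.e., combining the preceding proposition (generalised Droop-PSC) with the fact that the Droop quota is smaller than the Hare quota. Your additional verification that $\bar{q} < n/k$ is not actually needed (the proposition already establishes Droop-PSC, and the Droop quota is by definition below the Hare quota), and the bound $n/(k+1)+1 \le n/k$ you invoke can fail for small $n$; but this side remark does not affect the validity of your main argument.
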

									
														%
					
								\begin{corollary}
								Under linear orders, \newrule satisfies the majority principle. 
								\end{corollary}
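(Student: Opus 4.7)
The plan is to derive this corollary directly from the fact (established in the preceding proposition) that \newrule satisfies generalised Droop-PSC, combined with the observations already made in the paper that (i) under linear orders generalised $q$-PSC reduces to $q$-PSC, and (ii) for $k=1$, Droop-PSC is precisely the majority principle (as noted in the discussion of \citet{Wood97a}).

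First I would fix an instance with $k=1$ under linear preferences, and suppose that a majority of voters, i.e.\ a set $N'$ with $|N'| > n/2$, are solidly committed to some candidate subset $C'\subseteq C$. Then $N'$ is a solid coalition (and hence, since preferences are strict, a generalised solid coalition) for $C'$. Since the default quota satisfies $\bar q = n/(k+1) + \varepsilon$ with $\varepsilon$ small, and $k=1$ gives $\bar q \approx n/2$, we have $|N'| \geq 1 \cdot \bar q$, so with $\ell = 1$ the hypothesis of generalised Droop-PSC is met.

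Next, invoking the preceding proposition, \newrule returns a committee $W$ (of size $k=1$) satisfying generalised Droop-PSC. Applied to $N'$ and $C'$ with $\ell = 1$, this guarantees a set $C''\subseteq W$ of size at least $\min\{1,|C'|\} = 1$ such that for each $c''\in C''$ there is $i\in N'$ with $c''\pref_i c^{(i,|C'|)}$. Because preferences are strict and $C'$ is the set of candidates ranked above $C\setminus C'$ by every voter in $N'$, any candidate at least as preferred as the $|C'|$-th choice of some $i\in N'$ must lie in $C'$. Hence $C''\subseteq C'$, and since $|W| = 1$ and $|C''|\geq 1$, we conclude $W\subseteq C'$, which is exactly the majority principle.

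I do not anticipate any real obstacle here — the corollary is essentially a specialisation of the previous proposition together with the observation, already made in the preliminaries, that the majority principle is nothing other than Droop-PSC restricted to the single-winner case with strict preferences. The only small care needed is to verify that the default quota $\bar q$ is indeed no larger than the majority threshold $n/2$ for $k=1$ (which holds by construction since $\bar\varepsilon < 1$), so that a strict majority suffices to invoke generalised Droop-PSC.
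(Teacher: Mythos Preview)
Your proposal is correct and follows the same route as the paper's proof, which is simply the one-liner ``Under linear orders, Droop-PSC implies the majority principle.'' You have just unpacked this implication explicitly: \newrule satisfies generalised Droop-PSC (previous proposition), which under linear orders collapses to ordinary Droop-PSC, and Droop-PSC restricted to $k=1$ is by definition the majority principle.

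One small remark: your closing discussion about the default quota $\bar q$ is not really needed. The preceding proposition already establishes that \newrule satisfies generalised PSC with respect to the \emph{Droop} quota $q_D$ (not merely $\bar q$), and a strict majority $|N'|>n/2$ automatically gives $|N'|\ge \lfloor n/2\rfloor +1 = q_D$ when $k=1$. So you can invoke Droop-PSC directly without checking the size of $\bar\varepsilon$.
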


								\begin{proof}
									Under linear orders, Droop-PSC implies the majority principle. 
									\end{proof}

			We get the following corollary from Corollary~\ref{Corollary: EAR satisfies gen Hare PSC}.
			
			\begin{corollary}\label{cor:pjr}
			Under dichotomous preferences, \newrule satisfies proportional justified representation.  
				\end{corollary}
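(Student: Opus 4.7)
The plan is to chain together the results already established in the excerpt rather than re-derive anything from scratch. By the immediately preceding Corollary, \newrule satisfies generalised Hare-PSC under weak orders, and dichotomous preferences are a special case of weak orders, so in particular \newrule satisfies generalised Hare-PSC on any dichotomous profile. The task then reduces to showing that generalised Hare-PSC implies \pjr on dichotomous profiles.

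First I would invoke the earlier Corollary stating that under dichotomous preferences, \pjr, weak generalised Hare-PSC, and generalised Hare-PSC are equivalent. This equivalence was established via \propref{prop: Hare-PSC implies PJR} (generalised weak Hare-PSC $\Rightarrow$ \pjr) together with the converse proposition (\pjr $\Rightarrow$ generalised Hare-PSC). Applying this equivalence to the output $W$ of \newrule on a dichotomous profile gives \pjr immediately.

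If one preferred a more self-contained route that does not cite the equivalence corollary, I would argue as follows. Since generalised $q$-PSC trivially implies generalised weak $q$-PSC (the latter just restricts attention to solid coalitions with $|C'|\le\ell$), \newrule satisfies generalised weak Hare-PSC. Then \propref{prop: Hare-PSC implies PJR} applies directly to conclude \pjr.

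I do not anticipate any real obstacle: the corollary is a routine consequence of earlier results, and the only thing to be careful about is that the direction needed is generalised Hare-PSC $\Rightarrow$ \pjr, which passes through the weak variant via the trivial implication noted above. No new computation or combinatorial argument is required.
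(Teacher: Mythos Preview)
Your proposal is correct and mirrors the paper's own proof: the paper simply invokes the previously established fact that \newrule satisfies generalised Hare-PSC (Corollary~\ref{Corollary: EAR satisfies gen Hare PSC}) and then notes that under dichotomous preferences generalised Hare-PSC implies \pjr. Your more detailed route through generalised weak Hare-PSC and \propref{prop: Hare-PSC implies PJR} is just an unpacking of that same implication.
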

				\begin{proof}
					We had observed that under dichotomous preferences, generalised Hare-PSC implies PJR. 
					\end{proof}
			
				Incidentally, the fact that there exists a polynomial-time algorithm to satisfy PJR for dichotomous preferences was the central result of two recent papers~\citep{SFF16a,BFJL16a}. We have shown that EAR can in fact satisfy a property stronger than PJR that is defined with respect to the Droop quota.
					Since \newrule satisfies generalised PSC, it implies that there exists a polynomial-time algorithm to compute a committee satisfying generalised PSC. Interestingly, we already observed that checking whether a given committee satisfies generalised PSC is coNP-complete. 	
				
		 %

						%
						%
						

\subsection{Candidate Monotonicity under \newrule}

%

We show that  \newrule satisfies rank respecting candidate monotonicity (RRCM). In what follows we shall refer to the profile of all voter preferences (weak or strict) as simply the \emph{profile}.

							\begin{proposition}
								\newrule satisfies rank respecting candidate monotonicity (RRCM).
			\end{proposition}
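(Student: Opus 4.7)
My plan is to couple the executions of \newrule on the original profile $\pref$ and on the reinforced profile $\pref'$, where a single voter $i_0$ has reinforced the winning candidate $c\in W$ without changing any winning candidate's rank in any voter's preference. I first analyse how the rank-maximality priority $L$ transforms. Since the reinforcement moves $c$ into a strictly smaller-indexed equivalence class in $i_0$'s preferences, the rank vector $r(c)$ under $\pref'$ lexicographically (weakly) dominates $r(c)$ under $\pref$, so $c$'s position in the new priority ordering $L'$ is weakly higher than in $L$. By the RRCM hypothesis, the rank of every $c'\in W\setminus\{c\}$ is preserved in every voter's preference, so the rank vector $r(c')$ is unchanged; consequently the relative ordering among members of $W\setminus\{c\}$ in $L'$ matches that in $L$.

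Next I compare approval supports across the two profiles. A case analysis on how voter $i_0$'s equivalence classes can restructure (for example, $c$ joining an existing higher class, $c$ forming a new singleton class strictly above, or $c$'s original class becoming empty) shows that whenever $i_0$ approves $c$ in $j$-approval under $\pref$ she also approves $c$ in $j$-approval under $\pref'$: both $c$'s rank and the cut-off rank for $j$-approval move in directions that keep $c$ in $i_0$'s approved set. Hence, for any weight vector and any $j$, the $j$-approval support of $c$ under $\pref'$ is weakly larger than under $\pref$. For each $x\neq c$, the support under $\pref'$ can differ from the support under $\pref$ only via $i_0$'s contribution, and a symmetric cut-off argument bounds it above by the corresponding support under $\pref$.

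The core step is then an induction on rounds of \newrule on $\pref'$, running in parallel with the execution on $\pref$. The inductive invariant I plan to maintain is that at each round either \emph{(a)} $c$ has already been selected in the $\pref'$-run, or \emph{(b)} the two runs have made exactly the same selections in the same order so far, with identical voter weights. Under invariant (b), when the $\pref$-run selects $c$ at some round $r^*$ using $j^*$-approval, the identical state of the $\pref'$-run, combined with the support comparison from the previous step, gives $c$ support at least $q$ in $j^*$-approval under $\pref'$; the weakly-better $L'$-position of $c$, together with unchanged relative positions of the other eligible candidates, then ensures that $c$ is selected. The main obstacle will be preserving invariant (b) across rounds in which some other winning candidate's support drops below $q$ under $\pref'$ (because it loses voter $i_0$'s contribution), potentially causing the $\pref'$-run to increment $j$ earlier or to pick a different candidate than the $\pref$-run; handling this requires arguing that any such divergence forces $c$ itself to have crossed the $q$-threshold (under the improved $L'$-ranking) at an earlier or equal round, so the run has already transitioned to case (a), absorbing the divergence harmlessly.
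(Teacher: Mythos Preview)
Your coupling strategy is exactly the paper's: run EAR on $\pref$ and $\pref'$ in parallel and argue that at each round either $c$ has already been picked or the two runs coincide. The paper's own proof is extremely terse (it simply asserts that ``either $c_1$ is selected first for exactly the same reason \ldots\ or, alternatively, now $c_i$ is selected'') and does not spell out the support comparison or the priority analysis that you do. So your plan is the same route, developed in more detail.

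There is, however, a genuine gap in the way you close the inductive step. You correctly show that rank vectors of candidates in $W\setminus\{c\}$ are unchanged and that $c$'s rank vector weakly improves. But you then invoke ``unchanged relative positions of the other eligible candidates'' when arguing which candidate EAR picks from the $\ge q$ set at level $j_t$. That is not justified: a \emph{losing} candidate $d$ may have its rank in $i_0$'s list improve (for instance when $c$'s old singleton class is deleted and $c$ merges into an existing higher class, every candidate below the deleted slot moves up one rank), so $r(d)$ can lexicographically improve and overtake $r(c_t)$. Nothing in RRCM forbids this, since RRCM only pins down ranks of winning candidates. If such a $d$ also sits in the $\ge q$ set at level $j_t$, the $\pref'$-run could select $d$ rather than $c_t$ or $c$, and your invariant~(b) breaks without your having shown~(a) holds.

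Relatedly, the resolution you sketch for your stated obstacle (``any such divergence forces $c$ itself to have crossed the $q$-threshold at an earlier or equal round'') is asserted but not argued. When $c_t$ loses $i_0$'s $j_t$-approval because $c$ has crossed over it, one can indeed show $c\in A'_{j_t}(i_0)$, but this gives $c$ only $i_0$'s weight, not total support $\ge q$; and if $c$ was already in $A_{j_t}(i_0)$ (e.g.\ $c$ and $c_t$ were tied in $i_0$'s list) then $c$'s support does not increase at all while $c_t$'s drops. So the divergence need not push $c$ over the quota. To make the coupling go through you must either rule out these scenarios using the RRCM rank-preservation constraint more sharply, or weaken the invariant so it tolerates the $\pref'$-run drifting through different intermediate selections while still guaranteeing $c$ eventually appears. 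The paper's proof faces the same issues but simply does not address them.
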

			
			
			\begin{proof}
				Consider a profile $\pref$ with election outcome $W$ and let $c_i\in W$. Now consider another modified profile $\pref'$ in which $c_i$'s rank is improved while not harming the rank of other winning candidates, relative to $\pref$, and denote the election outcome under $\pref'$ by $W'$. Since we use rank-maximality to define the order $L$, note that the relative position of $c_i$ is at least as good under $\pref'$ as it is under $\pref$.
				
				Let the order of candidates selected under $\pref$ be $c_1,\ldots, c_i, \dots, c_{|W|}$. In the modified profile $\pref'$, let us trace the order of candidates selected. For the first candidate $c_1$, either it is selected first for exactly the same reason as it is selected first under $\pref$ or, alternatively, now $c_i$ is selected. If $c_i$ is selected, our claim has been proved. Otherwise, the same argument is used for candidates after $c_1$ until $c_i$ is selected.
\end{proof}

	This leads immediately to the following corollaries of the above proposition.

 				\begin{corollary}
 					\newrule satisfies  non-crossing candidate monotonicity (NCCM).
 					\end{corollary}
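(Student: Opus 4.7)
The plan is to derive this corollary directly from the previous proposition (that \newrule satisfies RRCM) by invoking the implication chain already established earlier in the paper. Recall that in the proposition characterising the relationships between candidate monotonicity notions, it was shown that $\textnormal{RRCM} \implies \textnormal{NCCM}$ holds in full generality (not just for linear orders). So the corollary is immediate once RRCM has been established.

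Concretely, I would simply state: since \newrule satisfies RRCM by the preceding proposition, and since RRCM implies NCCM by the earlier comparison of candidate monotonicity axioms, it follows that \newrule satisfies NCCM. There is no further work needed — the content of the claim has already been packaged into the RRCM result and the axiom implications.

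There is no real obstacle here; the only thing to check is that the implication $\textnormal{RRCM} \implies \textnormal{NCCM}$ is indeed applicable in our setting (weak orders). This is fine because the earlier proposition states this implication without any restriction to linear orders. Thus the proof is a one-line appeal to prior results.
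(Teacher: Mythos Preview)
Your proposal is correct and matches the paper's approach exactly: the paper states that the corollary follows immediately from the preceding proposition that \newrule satisfies RRCM, together with the earlier implication $\textnormal{RRCM} \implies \textnormal{NCCM}$.
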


				\begin{corollary}
					For $k=1$, \newrule satisfies candidate monotonicity (CM). 
					\end{corollary}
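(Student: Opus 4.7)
The plan is to derive this as an immediate consequence of two results already established in the paper: (i) the preceding proposition that \newrule satisfies RRCM, and (ii) the earlier proposition stating that under $k=1$ the four monotonicity axioms WCM, NCCM, RRCM, and CM all coincide.

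Concretely, I would argue as follows. Fix $k=1$, take any profile $\pref$ with \newrule outcome $W=\{c\}$, and suppose the single winning candidate $c$ is reinforced by one voter to obtain a new profile $\pref'$. Since \newrule satisfies RRCM (by the previous proposition) and since the only winning candidate is $c$ itself, the hypothesis ``without changing the respective ranks of other winning candidates in each voter's preferences'' is trivially vacuous: there are no other winning candidates whose ranks could be altered. Hence RRCM directly applies and $c$ remains in the \newrule outcome on $\pref'$, which is exactly the CM condition.

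Alternatively, and perhaps more cleanly for the write-up, I would simply invoke the earlier equivalence result: under $k=1$, RRCM $\iff$ CM. Since \newrule satisfies RRCM (in general, hence in particular for $k=1$), \newrule satisfies CM when $k=1$. There is no real obstacle here; the only thing to be careful about is making sure the hypotheses of the equivalence proposition are cited correctly (the equivalence is stated for $k=1$ with no further restriction on the preference domain, so it applies in the weak-order setting of \newrule). In either formulation the proof is a one-line corollary and does not require any new computation or new construction.
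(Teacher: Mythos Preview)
Your proposal is correct and matches the paper's approach exactly: the paper also derives this as an immediate corollary of the RRCM proposition together with the earlier observation that for $k=1$ the axioms RRCM and CM coincide.
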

					
					\begin{corollary}
										For dichotomous preferences, \newrule satisfies candidate monotonicity.
									\end{corollary}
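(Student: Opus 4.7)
The plan is to deduce this corollary directly from the previous proposition (that \newrule satisfies RRCM) together with the equivalence of RRCM and CM under dichotomous preferences stated in the earlier proposition in Section 2.3. So essentially no new algorithmic argument is needed; the work is in spelling out why the hypothesis of RRCM is automatically satisfied whenever CM's hypothesis is satisfied in the dichotomous setting.

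First, I would invoke the proposition in Section 2 which asserts that under dichotomous preferences RRCM and CM are equivalent. Since the previous proposition already establishes that \newrule satisfies RRCM for all weak-order profiles (in particular for dichotomous ones), the corollary follows immediately.

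To make the argument self-contained, I would also give a short direct explanation of the equivalence: under dichotomous preferences each voter's preference consists of two equivalence classes, the approved set $A_i$ and its complement. A reinforcement of a winning candidate $c$ by a single voter $i$ can only consist of moving $c$ from the lower equivalence class into $A_i$ (or leaving the preference unchanged, which is trivial). This operation does not alter the two equivalence classes structurally for any candidate other than $c$, so in particular the rank of every other winning candidate in voter $i$'s preference is unchanged. Hence the hypothesis of CM implies the hypothesis of RRCM, and therefore RRCM implies CM on the dichotomous domain.

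The only potential subtlety — and the step I would take most care with — is to confirm that all the ways a candidate can be "reinforced" under the formal definition of reinforcement (conditions (1)--(4) in Section 2.3) collapse, in the dichotomous case, to the single move of promoting $c$ into the top class. This is straightforward: condition (3) requires that $c$ strictly cross over some candidate $d$, which in the two-class case forces $c$ to leave the bottom class and join the top class, while conditions (1), (2) and (4) then leave the relative position of every other candidate intact. Once this is observed, the corollary is immediate.
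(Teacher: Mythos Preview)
Your proposal is correct and follows essentially the same approach as the paper: both argue that in the dichotomous case a reinforcement of $c$ leaves the rank of every other candidate (hence every other winning candidate) unchanged, so the CM hypothesis automatically implies the RRCM hypothesis, and then the previously established RRCM property of \newrule yields CM. You make this slightly cleaner by explicitly invoking the earlier proposition that RRCM and CM coincide on the dichotomous domain, whereas the paper re-derives the relevant direction inline; otherwise the arguments are the same.
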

									\begin{proof}
		Consider dichotomous profile $P$ and another profile dichotomous $P'$ in which winning candidate $c_i$'s rank is improved while not harming the rank of other winning candidates. This implies that the rank of $c_i$ is improved while not affecting the ranks of any other alternatives including the winning candidates. Since \newrule satisfies rank respecting candidate monotonicity (RRCM), it follows that for dichotomous preferences, \newrule satisfies candidate monotonicity. 					
										\end{proof}

%
%

	On the other hand, \newrule does not satisfy CM or WCM for $k>1$.	
						
	\begin{proposition}
	\newrule does not satisfy WCM. 
\end{proposition}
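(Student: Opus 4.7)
The proposal is to exhibit a profile $\pref$ with \newrule outcome $W$ and a single-voter reinforcement of some winning candidate producing a profile $\pref'$ whose \newrule outcome $W'$ satisfies $W\cap W'=\emptyset$. Since the preceding proposition establishes that \newrule satisfies RRCM, such a reinforcement must necessarily cross over at least one other winning candidate (otherwise the respective ranks of winning candidates are preserved and RRCM would guarantee that the reinforced candidate remains elected). This observation pinpoints the structural feature that the counterexample must exploit.

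The key mechanism is the rank-maximality priority ordering $L$, which is computed once at the start of the algorithm and is used to tie-break whenever multiple candidates reach support $\bar{q}$ in the same $j$-approval round. A reinforcement of a candidate $c$ that crosses over exactly one other candidate $x$ changes only the rank vectors of $c$ and $x$ (one rank-coordinate of each shifts by $\pm 1$). In a profile engineered so that many candidates have nearly-identical rank vectors and nearly-equal weighted approval supports at the decisive rounds, this tiny perturbation can flip the $L$-order of several pairs of candidates simultaneously. Moreover, once the first selected candidate changes, the induced reweighting in Step~\ref{step-reweighting} feeds forward into every subsequent $j$-approval round, so that a single initial shift in $L$ can propagate across all $k$ selections.

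Concretely, I would fix a small $k$ (say $k=2$) and design the candidate pool so that the putative original winners $W=\{w_1,\ldots,w_k\}$ are disjoint from a second block $W'=\{w_1',\ldots,w_k'\}$ of near-clones. I would then calibrate the voter preferences so that, at every round $i$ of \newrule applied to $\pref$, the winner $w_i$ and the non-winner $w_i'$ both clear the quota in the same $j$-approval sub-election, and $w_i$ is selected only because it has priority over $w_i'$ in $L$. The reinforcement would consist of a single voter pushing $w_1$ past $w_2$; this simultaneously improves $w_1$'s rank vector and degrades $w_2$'s, chosen so that after the change, (i) $w_1'$ now outranks $w_1$ in $L$ at the first selection, and (ii) the reweighting triggered by $w_1'$ being selected, together with the unchanged rank vectors of $w_3,\ldots,w_k$ and $w_3',\ldots,w_k'$, causes each $w_i'$ to be selected in round $i$ for $i\ge 2$ as well. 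Verifying this directly against the algorithm then gives $W'=\{w_1',\ldots,w_k'\}$ and hence $W\cap W'=\emptyset$, contradicting WCM. The main obstacle is the simultaneous calibration: the single-voter perturbation alters only two rank vectors by one unit each, so achieving $k$ coordinated tie-reversals requires arranging the profile with enough lexicographic near-ties in $L$ and enough weighted-support near-ties across all $k$ rounds so that the perturbation is felt by every selection round; a symmetric cyclic construction over the two blocks $W$ and $W'$ is the natural source of such coordinated near-ties.
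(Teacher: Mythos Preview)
Your framing is right in two respects: a WCM violation must involve the reinforced winner crossing over another winner (otherwise RRCM applies), and the rank-maximal priority $L$ together with cascaded reweighting is indeed the lever. But the specific mechanism you describe cannot work. You propose to reinforce $w_1$ by pushing it past $w_2$ and then assert that ``$w_1'$ now outranks $w_1$ in $L$ at the first selection.'' A single-voter reinforcement that swaps $w_1$ over $w_2$ alters only two rank vectors: $w_1$'s strictly improves lexicographically and $w_2$'s strictly degrades. The vector $r(w_1')$ is untouched, so if $w_1$ outranked $w_1'$ in $L$ before, it still does afterwards. Worse, since (by your indexing) $w_1$ is the candidate \newrule selects \emph{first} in the original run, a short case analysis shows it is necessarily still selected first after the reinforcement: at every $j$-approval level $w_1$'s support is weakly larger, every candidate other than $w_2$ has unchanged support, and $w_1$'s $L$-priority has not fallen. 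Hence $w_1\in W'$ and $W\cap W'\neq\emptyset$ regardless of how the near-ties are calibrated.

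The paper's counterexample runs the crossing the other way: it reinforces the \emph{second}-selected winner so that it crosses over the \emph{first}-selected winner. It is the degraded rank vector and reduced low-$j$ approval support of the \emph{crossed-over} winner that let a fresh candidate take round one; the altered reweighting then propagates and also knocks out the reinforced candidate in round two. Your ``near-clone blocks with coordinated $L$-ties'' idea is the right ingredient for engineering the round-two flip, but the round-one displacement has to come from degrading the originally-first winner, not from any hoped-for drop in the reinforced candidate's own $L$-position, which is impossible.
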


	\begin{proof}
	Let $N=\{1, 2, 3, 4\}$, $C=\{a,b, c,d,e, f\}$, $k=2$, and let strict preferences be given by the following preference profile:
\begin{align*}
1:&\quad a, c, f, d,\ldots \\
2:&\quad d, b, f, a, c, \ldots \\
3:&\quad a, d, c, b, f, \ldots \\
4:&\quad f, e, c, d, \ldots,
\end{align*}
	With this preference profile \newrule will output the winning committee 
	$W=\{a, f\}$. First, $a$ is selected into $W$ and each weight of each voter $i\in\{1,3\}$ is reduced to $w_i=\frac{2-\bar{q}}{2}$ where $\bar{q}$ is the default quota. Since $\bar{q}>\frac{4}{3}$ we infer that $w_i<1/3$.  Moving to the $2$-approval election no candidate receives support of at least $\bar{q}$. Finally in the $3$-approval election candidate $f$ receives support at least $2$ which exceeds the quota, and candidate $c$ attains support $1+2w_i$. If $1+2w_i<\bar{q}$ then candidate $f$ is the only candidate exceeding the quota and so is elected. On the other hand, if $1+2w_i\ge \bar{q}$ then both candidates $c$ and $f$ exceed the quota; however, due to the default (rank maximal) priority ordering $f$ is still elected into $W$.

	\bigskip
	
	Now consider a reinforcement of $f$ by voter $1$ (shift $f$ from third to first place), this is described by the following preference profile:
\begin{align*}
1:&\quad f, a, c, d, \ldots \\
2:&\quad d, b, f, a, c,\ldots\\
3:&\quad  a, d, c, b, f,\ldots\\
4:&\quad c, e, f, d,\ldots,
\end{align*}
	With these preferences the winning committee is $W'=\{d, c\}$. In the 2-approval election both candidates $a$ and $d$ attain support of at least $\bar{q}$, however due to the priority ordering $L$ (rank maximality) candidate $d$ is selected into $W$ and each voter $i\in \{2,3\}$ has their weight reduced to $w_i=\frac{2-\bar{q}}{2}<\frac{1}{3}$. In the 3-approval election both candidates $c$ and $f$ attain support of at least $2\ge \bar{q}$. Furthermore, $c$ and $f$ are equally ranked with respect to rank maximality but applying lexicographic tie-breaking leads to $c$ being elected into $W'$.
\end{proof}

\begin{remark}
	The above proposition was proven for the default quota $\bar{q}$ however the same counter-example can be used to prove the statement for any other quota $q\in (\frac{n}{k+1}, \frac{n}{k}]$.
	\end{remark}

			\section{Other Rules}	
			\label{sec:otherrules}

			In the literature, several rules have been defined for PR purposes. We explain how \newrule is better in its role at achieving a strong degree of PR or has other relative merits.
			
			\subsection{Quota Borda System (QBS)}

			\citet{Dumm84a} proposed a counterpart to STV called QPS (Quota Preference Score)  or a more specific version QBS (Quota Borda System). The rule works for complete linear orders and is designed to obtain a committee that satisfies Droop-PSC. It does so by examining the prefixes (of increasing sizes) of the preference lists of voters and checking whether there exists a corresponding solid coalition for a set of voters. If there is such a solid set of voters, then the appropriate number of candidates with the highest Borda count are selected.\footnote{The description of the rule is somewhat informal and long in the original books of Dummett which may have lead to The Telegraph terming the rule  as a ``a highly complex arrangement''~\citep{Tele11a}. }
We can partition the PSC demands among demands pertaining to rank from $j=1$ to $m$. For any given $j$, we only consider candidates in $C^{(j)}$
the set of candidate involved in preferences of voters up till their first $j$ positions. In Algorithm~\ref{algo:QBS}, we present a formal description of QBS.

							\begin{algorithm}[h!]
								  \caption{Quota Borda System (QBS)}
								  \label{algo:QBS}
\normalsize
								\begin{algorithmic}
									\REQUIRE  $(N,C,\succ, k)$ 
									 \COMMENT{$\succ$ is a profile of strict preferences} 							\ENSURE $W\subseteq C$ such that $|W|=k$
								\end{algorithmic}
								\begin{algorithmic}[1]
									\normalsize
		 
		\STATE Set  quota $q$ as some value $>n/(k+1)$ 
								\STATE $W\longleftarrow \emptyset$

			\STATE $j\longleftarrow 1 $					
			\WHILE{$j<m$}
				\WHILE{there does not exist a solidly supported set of candidates in $C^j$ whose demand is not met by $W$}
		\STATE $j\longleftarrow j+1$	
									\ENDWHILE
						
	\STATE Partition the set of voters into equivalence classes where each class has the same solidly supported set of $j$ candidates. \COMMENT{The next while loop in called the \emph{selection while loop} in which the selection of candidates in a stage occurs.}				

	\WHILE{there exists an equivalence class of voters $N'\subseteq N$ whose PSC demand with respect to first $j$ candidates is not met by $W$}

	\STATE Among the candidates in the first $j$ positions that are solidly supported by $N'$, select a candidate  $c\notin W$ that has the highest Borda score. 	
	\STATE $W\longleftarrow W\cup \{c\}$
	\ENDWHILE
	\ENDWHILE

								
									\RETURN $W$
								\end{algorithmic}
							\end{algorithm}

Although Dummett did not show that the rule satisfies some axiom which STV does not, he argued that QPS satisfies the Droop proportionality criterion and is somewhat less ``chaotic'' than STV. \citet{Schu02a} argues that QBS is chaotic as well and his Example 3 implicitly shows that QBS in fact violates WCM.\footnote{\citet{Gell02a} wrongly claims that QBS satisfies the stronger axiom of CM.} 
Tideman also feels that QBS is overly designed to satisfy Droop-PSC but is not robust enough to go beyond this criterion especially if voters in a solid coalition perturb their preferences. 
			
\newrule has some important advantages over QBS: (1) it can easily handle indifferences whereas QBS is not well-defined for indifferences. In particular, in order for QBS to be suitably generalised for indifferences and to still satisfy generalised PSC, it may become an exponential-time rule\footnote{QBS checks for PSC requirements and adds suitable number of candidates to represent the corresponding solid coalition of voters. In order to work for generalised PSC, QBS will have to identify whether solid coalitions of voters and meet their requirement which means that it will need to solve the problem of testing generalised-PSC which is coNP-complete.}, (2)  \newrule can easily handle voters expressing partial lists by implicitly having a last indifference class whereas QBS cannot achieve this, (3) \newrule rule satisfies an established PR property called PJR for the case for dichotomous preferences. As said earlier, QBS is not well-defined for indifferences and even for dichotomous preferences, (4) \newrule addresses a criticism of \citet{Tide06a}: ``Suppose there are voters who would be members of a solid coalition except that they included an ``extraneous'' candidate, which is quickly eliminated  among their top choices. These voters' nearly solid support for the coalition counts for nothing which seems to me inappropriate.'' We demonstrate the last flaw of QBS pointed out by Tideman in the explicit example below. \newrule does not have this flaw.

\begin{example}

Consider the profile with 9 voters and where $k=3$.
				\begin{align*}
					1:&\quad c_1, c_2, c_3, e_1, e_2, e_3, e_4, d_1\\
					2:&\quad c_2, c_3, c_1, e_1, e_2, e_3, e_4, d_1\\
					3:&\quad c_3, c_1, d_1, c_2, e_1, e_2, e_3, e_4\\
					4:&\quad e_1, e_2, e_3, e_4, c_1, c_2, c_3, d_1\\
					5:&\quad e_1, e_2, e_3, e_4, c_1, c_2, c_3, d_1\\
					6:&\quad  e_1, e_2, e_3, e_4, c_1, c_2, c_3, d_1\\
					7:&\quad e_1, e_2, e_3, e_4, c_1, c_2, c_3, d_1\\
					8:&\quad  e_1, e_2, e_3, e_4, c_1, c_2, c_3, d_1\\
					9:&\quad  e_1, e_2, e_3, e_4, c_1, c_2, c_3, d_1
					\end{align*}
					In the  example, $\{e_1, e_2, e_3\}$ is the outcome of QBS. Although PSC is not violated for voters in $\{1,2,3\}$ but the outcome appears to be unfair to them because they almost have a solid coalition. Since they form one-third of the electorate they may feel that they deserve that at least one candidate such as $c_1$, $c_2$ or $c_3$ should be selected. {\color{black} In contrast, it was shown in Example~\ref{Example: Illustration new rule} that \newrule does not have this flaw and instead produces the outcome $\{e_1, e_2, c_1\}$.}
					\end{example}

\subsection{Chamberlin-Courant and Monroe}

			There are other rules that have been proposed within the class of ``fully proportional representation'' rules such as Monroe~\citep{Monr95a} and Chamberlin-Courant (CC)~\citep{ChCo83a}. 
\citet{Monr95a} used the term ``fully proportional representation'' to refer to PR-oriented rules that take into account the full preference list. 
			Recently, variants of the rules called Greedy Monroe, and Greedy CC~\citep{EFSS14a, EFSS17a} have been discussed. However none of the rules satisfy even weak PSC~\citep{EFSS14a, EFSS17a}. 
			A reason for this is that voters are assumed to not care about how many of their highly preferred candidates are in the committee as long the most preferred is present. Monroe and CC are also NP-hard to compute~\citep{PSZ08a}. 
			
			
	\subsection{Phragm{\'e}n's First Method}

	Phragm{\'e}n's first method was first considered by Phragm{\'e}n but not published or pursued by him~\citep{Jans16a}. In the method, voters approve of most preferred candidates that have not yet been selected. The candidate with highest weight of approval is selected. The total weight of the voters whose approved candidate was selected is reduced by the Hare quota if the total weight is more than the Hare quota. Otherwise all such voters' weights are set to zero. 
	
	Although Phragm{\'e}n's first method seems to have been defined primarily for dichotomous preferences, the same definition of the rule works for linear orders. However the rule when applied to linear orders does not satisfy Hare-PSC. See the example below. 
	%
	%
	%
	%
	%
	%
	%
								
									\begin{example}

									Consider the profile with 9 voters and where $k=3$.
													\begin{align*}
														1:&\quad c_1, c_2, c_3, \ldots\\
														2:&\quad c_2, c_3, c_1, \ldots\\
														3:&\quad c_3, c_1, c_3,\ldots\\
														4:&\quad e_1, c_1, c_2, c_3, \ldots\\
														5:&\quad e_2, c_1, c_2, c_3,\ldots\\
														6:&\quad e_3, c_1, c_2, c_3,\ldots\\
														7:&\quad e_4, c_1, c_2, c_3,\ldots\\
														8:&\quad  e_5, c_1, c_2, c_3,\ldots\\
														9:&\quad  e_6, c_1, c_2, c_3,\ldots
		\end{align*}
					
														In the  example, $\{e_1, e_2, e_3\}$ is a possible outcome of rule. When $e_1$ is selected, voter $4$'s weight goes to zero. Then when $e_2$ is selected, voter $5$'s weight goes to zero. Finally $e_3$ is selected. Hare-PSC requires that $c_1$ or $c_2$, or $c_3$ is selected. 
														\end{example}
								
	%
	
	The next example shows that Hare-EAR and Phragm{\'e}n's first rule are not equivalent. 
								\begin{example}
								Consider the profile with 4 voters, where $k=2$ and with dichotomous preferences given as follows:
									\begin{align*}
														1:&\quad \{b\} \\
														2:&\quad \{a, c\} \\
														3,4:&\quad \{a\}.
		\end{align*}
		Note that in this example the Hare quota is $q_H=2$. Now under Hare-\newrule the winning committee is $W=\{a,c \}$. First $a$ is elected since her approval support is $3\ge q_H$ and all other candidates have support less than the quota. Once candidate $a$ is elected all of her supporting voters have their weights reduced to $w_i=\frac{1}{3}$. Now there is no candidate with support beyond the quota, $q_H$, and so we move to a 2-approval election. In this case the support of candidate $b$ is $2-\frac{1}{3}<q_H$ and the support of candidate $c$ is $2\ge q_H$, and so $c$ is elected into $W$.  
		
		 Under Phragm{\'e}n's first rule we attain $W'=\{a,b\}$. First candidate $a$ is elected since she has maximal support, then all supporting voters have their weight reduced to $\frac{1}{3}$. Then the weighted supported of candidate $b$ is $1$ and the weight support of candidate $c$ is $\frac{1}{3}$ -- hence Phragm{\'e}n's first rule elects $b$ into the committee.

		 Thus, Hare-\newrule and Phragm{\'e}n's rule are not equivalent under dichotomous preferences. 

								\end{example}

\bigskip

\subsection{Phragm{\'e}n's  Methods}

			A compelling rule is  Phragm{\'e}n's Ordered Method that can even be generalised for weak orders. Under strict preferences, it satisfies weak Droop-PSC~~\citep[Theorem 16.1 (ii), ][]{Jans16a}. On the other hand, even under strict preferences, it does not satisfy Droop-PSC~\citep[page 51, ][]{Jans16a}. If we are willing to forego PSC, then Phragm{\'e}n's Ordered Method seems to be an exceptionally  useful rule for strict preferences because unlike STV it satisfies both candidate monotonicity and committee monotonicity~\citep{Jans16a}. Committee monotonicity requires that for any outcome $W$ of size $k$, there is a possible outcome $W'$ of size $k+1$ such that $W'\supset W$.

\subsection{Thiele's Ordered Methods}

			Thiele's methods are based on identifying candidates that are most preferred by the largest weight of voters. For any voter who has had $j$ candidates selected has current weight $1/(j+1)$. 
\citet{Jans16a} presented an example (Example 13.15) that can be used to show that Thiele's ordered methods do not satisfy weak Droop-PSC under strict preferences. 			
			

			\bigskip

			\subsection{CPO-STV rules}
			
			A class of STV related rules is CPO-STV that was proposed by \citet{Tide06a}. The rules try to achieve a PR-type objective while ensuring that for $k=1$, a Condorcet winner is returned if there exists a Condorcet winner.\footnote{Since these rules are Condorcet-consistent, they are vulnerable to the no-show paradox~\citep{Moul88}.}
One particular rule within this class is Schulze-STV~\citep{Schu11a}. All of these rules are only defined for linear orders and hence do not satisfy generalised PSC. Furthermore, they all require enumeration of all possible committees and then finding pairwise comparisons between them. Hence they are exponential-time rules and impractical for large elections. \citet{Tide95a} writes that CPO-STV is ``computationally tedious, and for an election with several winners and many candidates it may not be feasible.''
\citet{Tide06a} also considered whether CPO-STV rules satisfy Droop-PSC but was unable to prove that they satisfy PSC (page 282). In any case, having an exponential-time rule satisfying PSC may not be compelling because there exists a trivial exponential-time algorithm that satisfies PSC: enumerate committees, check whether they satisfy PSC, and then return one of them.

 
%
%
%
%
%


				\section{Conclusions}

				\begin{table}[h!]
	
				\begin{center}
				\begin{tabular}{llll}
				\toprule
				&STV&QBS&\newrule\\
				\midrule
				Generalised D-PSC / H-PSC&no&no&yes\\
								Generalised weak D-PSC / H-PSC&no&no&yes\\
					PJR&no&no&yes\\
				D-PSC /  H-PSC&yes&yes&yes\\
			
				Weak D-PSC /  H-PSC&yes&yes&yes\\
				
				\midrule
				CM&no&no&no\\
				CM for dichotomous preferences&no&no&yes\\
				CM for $k=1$&no&yes&yes\\
				RRCM&no&no&yes\\
				NCCM&no&yes&yes\\
				\midrule
				polynomial-time&yes&yes&yes\\
				\bottomrule
				\end{tabular}
				\end{center}
				\caption{Properties satisfied by STV, QBS, and \newrule. QBS and STV are the only rules from the literature that satisfy PSC and are computable in polynomial time. STV does not satisfy weak PSC if discrete reweighting is used.
				}
				\label{table:rules}
				\end{table}

				\begin{table}[h!]
	
				\begin{center}
				\begin{tabular}{lcc}
				\toprule
				&Complexity of&Complexity of\\
				&{Computing}&{Testing}\\
				\midrule
				PSC&in P&in P\\
				Weak PSC&in P&in P\\
				Generalised PSC&in P&coNP-complete\\
				Generalised weak PSC&in P&coNP-complete\\
				\bottomrule
				\end{tabular}
				\end{center}
				\caption{Computational complexity of {computing } a committee satisfying a PR property and {testing} whether a given committee satisfies a property.    
				}
				\label{table:complexity}
				\end{table}

			%
			%
			%

%
%

				In this paper, we undertook a formal study of proportional representation under weak preferences. The generalised PSC axiom we proposed generalises several well-studied PR axioms in the literature. We then devised a rule that satisfies the axiom. 
Since EAR has relative merits over STV and Dummett's QBS (two known rules that satisfy PSC), it appears to be a compelling solution for achieving PR via voting. At the very least, it appears to be another useful option in the toolbox of representative voting rules and deserves further consideration and study. The relative merits of STV, QBS, and \newrule are summarised in Table~\ref{table:rules}.

	EAR can be modified to also work for `participatory budgeting' settings in which candidates may have a non-unit `cost' and the goal is to select a maximal set of candidates such that the total cost of candidates does not exceed a certain budget $B$. In our original setting of committee voting, EAR with $q=n/k$ can be seen as follows. 
We initialise the weight of each voter as $1$. Each candidate is viewed as having a unit cost and the budget is $k$. A candidate $c$ is selected if including $c$ does not exceed the budget ($k$) and if $c$ has approval support of $1n/k$. When $c$ is selected, the total weight of the voters supporting $c$ is decreased by $n/k$ times the unit cost. If no such candidate exists, we increment $j$-approval to $j+1$-approval. The process continues until no candidate can be included without exceeding the budget limit.
We now explain how to modify EAR to work for non-unit costs and budgets.
We again initialise the budget weight of each voter as $1$. 
A candidate $c$ is selected if including $c$ does not exceed the budget $B$ and 
if $c$ has approval support of $n/B$ times the cost of $c$. When $c$ is selected, the total weight of the voters supporting $c$ is decreased by $n/B$ times the cost of $c$. If no such candidate exists, we increment $j$-approval to $j+1$-approval. The process continues until no candidate can be included without exceeding the budget limit. Proportional representation axioms related to those studied in this paper have only recently been extended and explored in the participatory budgeting setting by~\cite{ALT18}.

Our work also sheds light on the complexity of computing committees that satisfy PR axioms as well the the complexity of testing whether a given a committee satisfies a given PR axiom. We found that whereas a polynomial-time algorithm such as \newrule finds a committee that satisfies generalised PSC, testing whether a given committee satisfies properties such as generalised PSC or generalised weak PSC is computationally hard. These findings are summarised in Table~\ref{table:complexity}.

				\section*{Acknowledgments}
				Haris Aziz is supported by a Julius Career Award.
				Barton Lee is supported by a Scientia PhD fellowship. The authors thank Markus Brill, Edith Elkind, Dominik Peters and Bill Zwicker for comments. 

\bibliographystyle{plainnat}
%


 \appendix

\section{STV}

	\paragraph{Proof of Proposition~\ref{prop:stv-psc-satisfy}}

						\begin{proof}
						Let $q\in (\frac{n}{k+1}, \frac{n}{k}]$ and assume that all voters have strict preferences. Suppose for the purpose of a contradiction that $W$ is the STV outcome for some instance and $q$-PSC is not satisfied. That is, there exists a positive integer $\ell$ and a solid coalition $N'\subseteq N$ with $|N'|\ge \ell q$ supporting a set of candidates $C'$ and $W\cap C'|<\min\{\ell^*, |C'|\}$. Without loss of generality we assume that $|C'|\ge \ell$. If this were not the case, i.e., $|C'|<\ell$, one could simply consider a smaller subset of voters in $N'$ and these voters will necessarily still solidly support all candidates in $C'$.
						
						Thus, given that $N'$ solidly supports the candidate set $C'$ with $|C'|\ge \ell$ and $|N'|\ge \ell q$ we wish to derive a contradiction from the fact that 
						$$j=|W\cap C'|<\min\{\ell, |C'|\} =\ell.$$
						The STV algorithm iteratively elects or eliminates a single candidate.\footnote{Technically speaking, if the sum of elected candidates, $W$, and unelected but also uneliminated candidates, $C^*$, is equal to $k$ the set of all unelected and also uneliminated candidates, $C^*$, are elected simultaneously in the same iteration. Within this proof and for simplicity of notation, we assume that if such situation has occurred the election of the candidates in $C^*$ occurs in a sequential manner and supporting voter weights are reduced as per line 8-10 of Algorithm~\ref{algo:stv}. Clearly this assumption is without loss of generality.} Let $T$ be the number of the iteration which the STV algorithm runs through to output $W$, and let $(c_1, \ldots, c_T)$ be a sequence of candidates such that in iteration $t\in \{1, \ldots, T\}$ candidate $c_t$ is either elected or eliminated. Note that $T\ge k$ and it need not be the case that $\cup_t\{c_t\}=C$. Furthermore, since $|W\cap C'|=j$ it must be the case that $|\cup_t\{c_t\}\cap C'|\ge j$.

						First, we claim that $C'\subseteq \cup_t \{c_t\}$. If this were not the case, then in every iteration voters in $N'$ support only candidates in $C'$ and since precisely $j$ candidates in $C'$ were elected the remaining weight of voters in $N'$ is $>(\ell -j)q$. But this is a contradiction; the STV algorithm cannot terminate when voters in $N$ have remaining weight $>q$ (recall the footnote of the previous paragraph) since this would imply $<k$ candidates have been elected. Thus, we conclude that $C'\subseteq \cup_t \{c_t\}$. This implies that there exist a unique ordering of the candidates in $C'$
						$$(c_1', \ldots, c_{|C'|}')$$
						 which maintains the ordering $(c_1, \ldots, c_T)$, i.e., simply define the one-to-one mapping $\sigma: \{1, \ldots, |C'|\}\rightarrow \{1, \ldots, T\}$ such that $(c_1', \ldots, c_{|C'|}')=(c_{\sigma(1)}, \ldots, c_{\sigma(|C'|)})$. 
						 
						 Second, we claim that the $j$ candidates $(c_{|C'|-(j-1)}', c_{|C'|-(j-2)}', \ldots, c_{|C'|}')$ are all elected. Starting with iteration $t=\sigma(|C'|)$ suppose that candidate $c_{|C'|}'$ is not elected. This would then imply that at the start of iteration $t$ voters in $N'$ have weight $>(\ell-j)q>q$ -- having only supporting candidates in $C'$ in earlier iterations at precisely $j$ of these being elected. Furthermore, since  $c_{|C'|}'$ is the only unelected and un-eliminated candidate in $C'$ all voters in $N'$ support candidate $c_{|C'|}'$. Thus, the weighted-plurality support is $>q$ which contradicts the assumption that $c_{|C'|}'$ is not elected. We conclude that in fact candidate $c_{|C'|}'$ is elected and so in all earlier iterations $s<t$ the weight of voters in $>(\ell-j+1)q$ and precisely $(j-1)$ candidates from $C'$ are elected in earlier iterations. Now consider iteration $t'=\sigma\big(|C'|=1\big)$ and suppose that candidate $c_{|C'|-1}'$ is not elected. But at the start of this iteration each voter in $N'$ supports with their plurality vote either candidate $c_{|C'|}$ or candidate $c_{|C'|-1}$ and in total have weight $>(\ell-j+1)q>2q$. Thus, at least one of the candidate must have weighted-plurality support $>q$ and so no candidate is eliminated in this iteration, and we conclude that it must be that candidate $c_{|C'|-1}'$ is elected. This argument can be repeated for all iterations $t''=\sigma(|C'|-2), \ldots, \sigma\big(|C'|-(j-1)\big)$, and so we conclude that the $j$ candidates $(c_{|C'|-(j-1)}', c_{|C'|-(j-2)}', \ldots, c_{|C'|}')$ are all elected.
						 
						 Finally, we argue that candidate $c_{|C'|-j}$ is also elected which contradicts the original assumption that $|W\cap C'|=j<\ell$. To see this, note that at the start of iteration $t^*=\sigma(|C'|-j)$ the total weight of voters in $N'$ is $>\ell q$ (applying the claim from the previous paragraph) and each voter in $N'$ has a plurality vote for one of the $(j+1)$ candidates in $\cup_{i=0}^{j} \{c_{|C'|-i}'\}$. Noting that $\ell \ge j+1$, there exists at least one candidate in $\cup_{i=0}^{j} \{c_{|C'|-i}'\}$ who receives weighted-plurality support $> q$ and so no candidate is eliminated in this iteration. In particular, this implies that candidate $c_{|C'|-j}$ is elected. Combining this with the previous claims we see that $j+1$ candidate are elected which contradicts the original assumption and completes the proof.
						\end{proof}

 \section{Complexity of Testing PSC}
 
 \begin{proposition}
	 Under linear orders, it can be tested in polynomial time whether a committee satisfies PSC.
	 \end{proposition}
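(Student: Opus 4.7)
The plan is to enumerate all candidate subsets $C'$ that can possibly be solidly supported by some voter, compute the maximal solid coalition backing each such $C'$, and then check the PSC inequality for every relevant value of $\ell$. The key observation that keeps this tractable is a structural one about solid coalitions under linear orders: if a voter $i$ solidly supports $C'$, then by the very definition of a solid coalition, $C'$ must coincide with the set of $i$'s top $|C'|$ ranked candidates. Consequently the only candidate subsets that can ever appear as $C'$ in a PSC constraint are the sets $C_i^{j} := \{c^{(i,1)}, c^{(i,2)}, \ldots, c^{(i,j)}\}$ for some voter $i$ and some $j \in \{1, \ldots, m\}$. This gives at most $nm$ distinct candidate sets to examine, rather than all $2^m$ subsets.

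First I would loop over $i \in N$ and $j \in \{1, \ldots, m\}$, forming the candidate set $S = C_i^{j}$ and (using a hash-map or by sorting the sets lexicographically) grouping together the pairs $(i, j)$ that produce the same set $S$. For each distinct $S$ encountered, I would then compute in $O(nm)$ time the maximal solid coalition $N(S) := \{i \in N : S \text{ equals } i\text{'s top } |S| \text{ candidates (as a set)}\}$; concretely, $i \in N(S)$ iff the top $|S|$ positions of $\pref_i$ contain exactly the elements of $S$. Because any solid coalition supporting $S$ is a subset of $N(S)$ and larger coalitions impose strictly stronger PSC demands, it suffices to check the PSC inequality for $N'=N(S)$.

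Next, for each such pair $(S, N(S))$ I would determine the largest integer $\ell^*$ with $|N(S)| \ge \ell^* q$, namely $\ell^* = \lfloor |N(S)|/q \rfloor$. Then the PSC constraints imposed by $S$ are exactly $|W \cap S| \ge \min\{\ell, |S|\}$ for $\ell \in \{1, \ldots, \ell^*\}$. Since $\min\{\ell, |S|\}$ is non-decreasing in $\ell$, the binding inequality is the one for $\ell = \ell^*$, and I only need to verify $|W \cap S| \ge \min\{\ell^*, |S|\}$. A single intersection computation takes $O(m)$ time. If any check fails, $W$ violates PSC; otherwise $W$ satisfies PSC with respect to every solid coalition.

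Summing up, the algorithm examines $O(nm)$ candidate sets, and each examination takes $O(nm)$ time to compute $N(S)$ plus $O(m)$ time to check the inequality, giving a total running time polynomial in $n$ and $m$. The only subtle step is the justification that restricting attention to sets of the form $C_i^{j}$ and to their maximal coalitions $N(S)$ loses no generality; both follow directly from the definition of a solid coalition under strict preferences. No deeper obstacle arises, since PSC is genuinely a property about ``top-prefix'' sets of the input profile, which are linear in the input size.
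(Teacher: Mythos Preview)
Your proposal is correct and follows essentially the same approach as the paper: both exploit the observation that under linear orders a solidly supported set $C'$ must be some voter's top-$|C'|$ prefix, so one only needs to scan the $O(nm)$ prefix sets and verify the PSC inequality for each. Your write-up is considerably more detailed than the paper's (which is a two-sentence sketch), in particular you make explicit why it suffices to check only the maximal coalition $N(S)$ and only the largest admissible $\ell$, but the underlying idea is identical.
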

	 \begin{proof}
		 For each $i$ from $1$ to $m$ one can look at prefixes of preference lists of sizes $i$. For these prefixes, we can see check whether there exists a corresponding solid coalition. For such solid coalitions we can check whether the appropriate number of candidates are selected or not.
		 \end{proof}

	 The same idea can be used for weak PSC. 

	 \begin{proposition}
		 Under linear orders, it can be tested in polynomial time whether a committee satisfies weak PSC.
		 \end{proposition}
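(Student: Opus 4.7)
The plan is to mirror the approach of the preceding proposition and adapt it to the weak PSC setting. The key observation is that under linear orders any solid coalition $N'$ for a candidate set $C'$ consists precisely of voters all of whom have $C'$ as their top-$|C'|$ candidates (in some internal order). Consequently, the only candidate subsets $C'$ that can be supported by a non-empty solid coalition are \emph{prefix sets}, i.e., sets of the form $\{\text{top-}i\text{ candidates of voter } v\}$ for some $v \in N$ and $i \in \{1, \ldots, m\}$. There are at most $nm$ such distinct prefix sets, and each can be enumerated in polynomial time by inspecting the initial segments of each voter's preference list.

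Given a prefix set $C'$ of size $i$, the maximal solid coalition for $C'$ is just $N' = \{u \in N : \text{the top-}i\text{ candidates of } u \text{ equal } C'\}$, which is computable in $O(nm)$ time. I would then argue that it suffices to examine this maximal coalition: any smaller solid coalition for $C'$ only weakens the triggering size condition $|N'| \ge \ell q$, and the required conclusion (on $W$) is the same. Moreover, once $(N', C')$ is fixed, the weak PSC demand $|W \cap C'| \ge \min\{\ell, |C'|\} = |C'|$, using the hypothesis $|C'| \le \ell$, does not depend on $\ell$ at all. Hence the condition is active for $(N', C')$ if and only if $|N'| \ge |C'| \cdot q$—the case $\ell = |C'|$, which is the least restrictive on $|N'|$—and then it collapses to the single requirement $C' \subseteq W$.

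With these observations in hand, the algorithm is immediate: for each voter $v$ and each prefix length $i$, compute $C'$ and its maximal coalition $N'$; if $|N'| \ge i q$ and $C' \not\subseteq W$, report a failure of weak PSC, otherwise continue. If no prefix set yields a violation, report that $W$ satisfies weak PSC. There are $O(nm)$ iterations each requiring polynomial work, giving an overall polynomial runtime. The main conceptual step is the reduction to prefix sets together with the argument that only the maximal solid coalition per prefix set must be examined; this is where all the work happens, and no significant obstacle arises because the size constraint $|C'| \le \ell$ in the definition of weak PSC eliminates the need to enumerate over $\ell$ separately (in contrast to what would be required for generalised PSC).
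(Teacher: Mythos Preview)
Your proposal is correct and follows essentially the same prefix-enumeration approach as the paper; indeed, the paper's own proof consists of the single remark that ``the same idea can be used for weak PSC,'' deferring entirely to the preceding proposition. Your write-up is more explicit---in particular the reduction to $\ell = |C'|$ via the observation that the conclusion collapses to $C' \subseteq W$---but this is a fleshing-out of the same argument rather than a different route.
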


		 \begin{proposition}
			 Under dichotomous preference, the problem of testing generalised PSC is coNP-complete.	 
			 \end{proposition}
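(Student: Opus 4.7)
The plan is to prove both directions (membership and hardness) separately, leveraging the equivalences that have already been established earlier in the paper.

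For containment in coNP, I would describe a succinct certificate for a \emph{no}-instance, i.e., for a committee $W$ that fails generalised $q$-PSC. Such a certificate consists of a positive integer $\ell \le k$, a generalised solid coalition $N' \subseteq N$ together with the candidate subset $C'$ it supports witnessing $|N'| \ge \ell q$, plus an implicit check that no $C'' \subseteq W$ of size $\min\{\ell, |C'|\}$ satisfies the weak-preference condition with respect to $C'$. All three objects have size polynomial in the input, and the verifier only needs to (i) confirm that $N'$ is indeed a generalised solid coalition for $C'$ (a matter of checking weak preferences voter by voter), (ii) confirm the cardinality bound on $N'$, and (iii) iterate over subsets of $W$ of the correct size and check the ``$c'' \pref_i c^{(i,|C'|)}$'' condition. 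Since $|W| = k \le m$, the last check is polynomial. Thus the problem lies in coNP.

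For coNP-hardness, the idea is to reduce from the complement of the problem of testing PJR, which was shown to be coNP-complete by \citet{AzHu16a,AEH+18}. The key observation is the corollary established just before this statement: under dichotomous preferences, PJR, generalised weak Hare-PSC, and generalised Hare-PSC are equivalent. Thus, given an instance $(N, C, \boldsymbol{A}, k, W)$ of the PJR testing problem, I would interpret the approval ballots $\boldsymbol{A}$ as the corresponding dichotomous preference profile $\pref$ and set the quota $q$ in the generalised PSC instance to be the Hare quota $q_H = n/k$. The equivalence then immediately tells us that $W$ satisfies PJR on $\boldsymbol{A}$ if and only if $W$ satisfies generalised Hare-PSC on $\pref$, so the reduction is correctness-preserving. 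Since the transformation is itself polynomial (indeed trivial), this establishes coNP-hardness.

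The main obstacle is essentially just a presentational one: the statement speaks of ``generalised PSC'' without specifying the quota, whereas the equivalence with PJR holds specifically for the Hare quota. I would therefore begin the proof by fixing $q = q_H$ and invoking Lemma~\ref{Lemma: q and q' generalised PSC} if needed to justify that hardness for this canonical choice of $q$ suffices, or alternatively reduce to generalised $q$-PSC for any fixed $q$ in the admissible range by adjusting the number of dummy voters (padding $N$ so that $\ell q_H$ and $\ell q$ cross the same combinatorial threshold). Once the quota issue is pinned down, the rest of the argument is a direct consequence of the equivalence proved earlier, together with the routine coNP-membership certificate described above.
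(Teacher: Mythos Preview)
Your proof proposal follows essentially the same route as the paper: both rely on the previously established equivalence between generalised Hare-PSC and PJR under dichotomous preferences, together with the coNP-completeness of testing PJR from \citet{AzHu16a,AEH+18}. The paper's argument is in fact even more terse than yours---it simply invokes the equivalence to transfer coNP-completeness of PJR testing in both directions (membership and hardness) at once, without giving a separate membership certificate.

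There is one small gap in your explicit membership argument: step~(iii) of your verifier ``iterates over subsets of $W$ of the correct size,'' but there are $\binom{k}{\min\{\ell,|C'|\}}$ such subsets, which is not polynomial in $k$ in general, so ``$|W|=k\le m$'' does not by itself justify that the check is polynomial. The fix is easy: rather than enumerating subsets, compute in polynomial time the set $G=\{c\in W : \exists\, i\in N',\ c\pref_i c^{(i,|C'|)}\}$ and simply check whether $|G|<\min\{\ell,|C'|\}$; this is equivalent since any qualifying $C''$ must lie inside $G$, and conversely $G$ itself (or any large enough subset of it) works. Alternatively, and more in the spirit of the paper, you can drop the explicit membership argument altogether and let the equivalence with PJR carry both directions. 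Your remark on the quota is well taken; the paper tacitly fixes the Hare quota for this statement via the corollary it cites.
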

			 

			 \begin{proof}
				 Under dichotomous preferences, generalised PSC is equivalent to PJR. Since \citet{AzHu16a} and \citet{AEH+18} showed that testing PJR is coNP-complete, it follows that that testing Generalised PSC is coNP-complete as well. 
%
%
%
%
%
%
%
				 \end{proof}

%
%
%

\end{document}